\newtheorem{theorem}{\bfseries Theorem}[section] 
\newtheorem{lemma}[theorem]{\bfseries Lemma}
\newtheorem{corollary}[theorem]{\bfseries Corollary}
\newtheorem*{claim*}{\bfseries Claim}
\theoremstyle{definition}
\crefname{theorem}{Theorem}{Theorems}
\crefname{proposition}{Proposition}{Propositions}
\crefname{lemma}{Lemma}{Lemmas}
\crefname{exmp}{Example}{Examples}
\crefname{corollary}{Corollary}{Corollarys}
\crefname{claim}{Claim}{Claims}
\crefname{remark}{Remark}{Remarks}
\crefname{section}{Section}{Sections}
\newcommand{\size}[1]{\left| #1 \right|}
\newcommand{\bigO}[1]{\mathrm{O}(#1)}
\newcommand{\set}[1]{\left\{ #1 \right\}}
\newcommand{\inset}[2]{\left\{ #1 \;\middle|\; #2 \right\}}
\newcommand{\sequence}[1]{\langle #1 \rangle}
\newcommand{\und}[1]{#1^{\mathrm{und}}}
\newcommand{\outneighbor}[2]{N_#1^+(#2)}
\newcommand{\inneighbor}[2]{N_#1^-(#2)}
\newcommand{\neighbor}[2]{N_#1(#2)}
\newcommand{\source}[1]{{#1}^\mathtt{s}}
\newcommand{\sink}[1]{{#1}^\mathtt{t}}
\newcommand{\convexpath}[2]{
    [
            create hullnodes/.code={
                    \global\edef\namelist{#1}
                    \foreach [count=\counter] \nodename in \namelist {
                        \global\edef\numberofnodes{\counter}
                        \node[draw=none, fill=none] at (\nodename) [draw=none,name=hullnode\counter] {};
                    }
                    \node[draw=none, fill=none] at (hullnode\numberofnodes) [name=hullnode0,draw=none] {};
                    \pgfmathtruncatemacro\lastnumber{\numberofnodes+1}
                    \node[draw=none, fill=none] at (hullnode1) [name=hullnode\lastnumber,draw=none] {};
                },
            create hullnodes
        ]
    ($(hullnode1)!#2!-90:(hullnode0)$)
    \foreach [
        evaluate=\currentnode as \previousnode using \currentnode-1,
        evaluate=\currentnode as \nextnode using \currentnode+1
    ] \currentnode in {1,...,\numberofnodes} {
            -- ($(hullnode\currentnode)!#2!-90:(hullnode\previousnode)$)
            let \p1 = ($(hullnode\currentnode)!#2!-90:(hullnode\previousnode) - (hullnode\currentnode)$),
            \n1 = {atan2(\y1,\x1)},
            \p2 = ($(hullnode\currentnode)!#2!90:(hullnode\nextnode) - (hullnode\currentnode)$),
            \n2 = {atan2(\y2,\x2)},
            \n{delta} = {-Mod(\n1-\n2,360)}
            in
                {arc [start angle=\n1, delta angle=\n{delta}, radius=#2]}
        }
    -- cycle
}
\begin{document}
	\title{Independent set reconfiguration on directed graphs}
	\author{Takehiro Ito\thanks{Graduate School of Information Sciences, Tohoku University, Japan. Email: \texttt{takehiro@tohoku.ac.jp}} \and
	Yuni Iwamasa\thanks{Graduate School of Informatics, Kyoto University, Japan. Email: \texttt{iwamasa@i.kyoto-u.ac.jp}} \and
	Yasuaki Kobayashi\thanks{Graduate School of Information Science and Technology, Hokkaido University. Email: \texttt{koba@ist.hokudai.ac.jp}} \and
	Yu Nakahata\thanks{Division of Information Science, Nara Institute of Science and Technology, Japan. Email: \texttt{yu.nakahata@is.naist.jp}} \and
	Yota Otachi\thanks{Graduate School of Informatics, Nagoya University, Japan. Email: \texttt{otachi@nagoya-u.jp}} \and
	Masahiro Takahashi\thanks{Graduate School of Informatics, Kyoto University, Japan. Email: \texttt{takahashi.masahiro.63e@kyoto-u.jp}} \and
	Kunihiro Wasa\thanks{Toyohashi University of Technology, Japan. Email: \texttt{wasa@cs.tut.ac.jp}}
	}
	\date{\today}
	\maketitle
	
\begin{abstract}
\textsc{Directed Token Sliding} asks,
given a directed graph and two sets of pairwise nonadjacent vertices,
whether one can reach from one set to the other
by repeatedly applying a local operation that exchanges a vertex in the current set with one of its out-neighbors, while keeping the nonadjacency.
It can be seen as a reconfiguration process where a token is placed on each vertex in the current set,
and the local operation slides a token along an arc respecting its direction.
Previously, such a problem was extensively studied on undirected graphs, where the edges have no directions and thus the local operation is symmetric.
\textsc{Directed Token Sliding} is a generalization of its undirected variant since an undirected edge can be simulated by two arcs of opposite directions.

In this paper, we initiate the algorithmic study of \textsc{Directed Token Sliding}.
We first observe that the problem is PSPACE-complete even if we forbid parallel arcs in opposite directions
and that the problem on directed acyclic graphs is NP-complete and W[1]-hard parameterized by the size of the sets in consideration.
We then show our main result: a linear-time algorithm for the problem on directed graphs whose underlying undirected graphs are trees, which are called polytrees.
Such a result is also known for the undirected variant of the problem on trees~[Demaine et al.~TCS 2015],
but the techniques used here are quite different because of the asymmetric nature of the directed problem.
We present a characterization of yes-instances based on the existence of
a certain set of directed paths,
and then derive simple equivalent conditions from it by some observations,
which admits an efficient algorithm.
For the polytree case,
we also present a quadratic-time algorithm that outputs, if the input is a yes-instance, one of the shortest reconfiguration sequences.
\end{abstract}
\begin{quote}
	{\bf Keywords:} Combinatorial reconfiguration, token sliding, directed graph, independent set, graph algorithm
\end{quote}

\section{Introduction}
In a reconfiguration problem, we are given an instance of a decision problem with two feasible solutions $\source{I}$ and $\sink{I}$.
The task is to decide if we can transform $\source{I}$ into $\sink{I}$ by repeatedly applying a modification rule while maintaining the feasibility of intermediate solutions.
Reconfiguration problems are studied for several problems such as \textsc{Independent Set}~\cite{HD05,KMM12,BKW14,DDFHIOOUY15,FHOU15,HU16,BB17,BMP17,LM19,BKLMOS20,IKOSUY20}, \textsc{Dominating Set}~\cite{SMN16,HIMNOST16,LMPRS18}, \textsc{Clique}~\cite{IOO15}, \textsc{Matching}~\cite{BHIM19,BBHIKMMW19,IKKKO19}, and \textsc{Graph Coloring}~\cite{BC09,OSIZ18,BHIKMMSW20}. (See also surveys~\cite{Heu13, Nis18}.)

Among the problems, \textsc{Independent Set Reconfiguration} is one of the most well-studied problems. There are three different modification rules studied in the literature: Token Addition and Removal (TAR)~\cite{IDHPSUU11,KMM12}, Token Jumping (TJ)~\cite{IKOSUY14,IKO14,BMP17}, and Token Sliding (TS)~\cite{HD05,BKW14,DDFHIOOUY15,FHOU15,HU16,BB17,LM19,BKLMOS20,IKOSUY20}.
TAR allows us to add or remove any vertex in the current independent set as long as the size of the resultant set is at least a given threshold.
TJ allows us to exchange any vertex in the set with any vertex outside the set.
TS is a restricted version of TJ that requires the exchanged vertices to be adjacent.
We call \textsc{Independent Set Reconfiguration} under TS simply \textsc{Token Sliding}.
Observe that all these rules are \emph{symmetric}: If one can transform $\source{I}$ into $\sink{I}$, then one also can transform $\sink{I}$ into $\source{I}$.
Our question is: What happens if we adopt an \emph{asymmetric} rule?

In this paper, we study a directed variant of \textsc{Token Sliding}, which we call \textsc{Directed Token Sliding}.
In this problem, the input graph is directed and we can slide tokens along the arcs only in their directions.
Here, we say that a set $I$ of vertices of a directed graph $G$ is an independent set when it is an independent set in the underlying undirected graph of $G$.\footnote{One may define an independent set of a digraph in an asymmetric way: A subset $I$ of the vertex set $V$ of the directed graph $G$ is an independent set if $u \in I$ implies $v \notin I$ for all $u, v \in V$ such that $G$ has an arc $(u, v)$. However, this definition is the same as ours. (If $u$ and $v$ are adjacent in the underlying undirected graph, at most one of $u$ and $v$ can belong to $I$.)}
If we allow two arcs with the opposite directions, this problem is a generalization of \textsc{Token Sliding}, and thus PSPACE-complete in general.

We show three results for \textsc{Directed Token Sliding}.
First, we show that the problem is PSPACE-complete even on oriented graphs, where antiparallel arcs (two parallel arcs with opposite directions) are not allowed.
Second, if we restrict input graphs to directed acyclic graphs (DAGs), we prove that the problem is NP-complete.
Moreover, the problem is $W[1]$-hard parameterized by the number of tokens.
As for our positive and main result, we show that the problem can be solved in linear time on polytrees, that is, digraphs whose underlying graphs are trees.
For a polytree, we show that all reconfiguration sequences have the same length.
We can also construct one of them in $\bigO{|V|^2}$ time, where $V$ is the set of vertices of the input graph.
Note that our algorithm is optimal in the following sense:
We can show that there exists an infinite family of instances on directed paths whose reconfiguration sequences have length $\Omega(|V|^2)$ (which is easily obtained from lower bound examples for undirected paths given in \cite{DDFHIOOUY15}).
For \textsc{Token Sliding} on undirected trees,
Demaine et al.~\cite{DDFHIOOUY15} showed a linear-time algorithm to check the reconfigurability. 
They also showed an $\bigO{|V|^2}$-time algorithm to construct a reconfiguration sequence of length $\bigO{|V|^2}$ for yes-instances.
However, the output sequence is not guaranteed to be shortest.
The best known algorithm to output a shortest reconfiguration sequence for undirected trees runs in $\bigO{|V|^5}$ time~\cite{Sug18}.
In contrast to the undirected counterpart, our algorithm to construct a (shortest) reconfiguration sequence on polytrees runs in $\bigO{|V|^2}$ time, which is optimal.
It should be mentioned that our quadratic-time algorithm does not imply a quadratic-time algorithm for finding a shortest reconfiguration sequence for undirected trees since we do not allow polytrees to have antiparallel arcs.


\paragraph*{Related work}
\textsc{Token Sliding} on undirected graphs was introduced by Hearn and Demaine~\cite{HD05}.
They show that the problem is PSPACE-complete even on planar graphs with maximum degree~3.
The problem is also PSPACE-complete on bipartite graphs~\cite{LM19} and split graphs~\cite{BKLMOS20}.
In contrast to these hardness results, the problem is polynomial-time solvable on cographs~\cite{KMM12}, claw-free graphs~\cite{BKW14}, trees~\cite{DDFHIOOUY15}, cactus graphs~\cite{HU16}, interval graphs~\cite{BB17}, bipartite permutation graphs, and bipartite distance-hereditary graphs~\cite{FHOU15}.

Interestingly, \textsc{Independent Set Reconfiguration} on bipartite graphs is NP-complete under TAR and TJ~\cite{LM19}.
Such graph classes are not known for TS.
In this paper, we show that \textsc{Directed Token Sliding} is NP-complete on directed acyclic graphs (DAGs).

\textsc{Token Sliding} is also studied from the viewpoint of fixed-parameter tractability (FPT).
When the parameter is the number of tokens,
the problem is W[1]-hard both on $C_4$-free graphs and bipartite graphs, while it becomes FPT on their intersection, $C_4$-free bipartite graphs~\cite{BBDLM20}.
In this paper, we show that \textsc{Directed Token Sliding} is W[1]-hard on DAGs.

Although \textsc{Token Sliding}, as far as we know, has not been studied for digraphs, there are some studies on reconfiguration considering directions of edges.
An orientation of a simple undirected graph is an assignment of directions to the edges of the graph.
There are several studies for reconfiguring orientations of a graph, such as strong orientations~\cite{GZ83,FPS01,TIKKKMNOO22}, acyclic orientations~\cite{FPS01}, nondeterministic constraint logic~\cite{HD05}, and $\alpha$-orientations~\cite{ACHKMSV20}.
Very recently, Ito et al.~\cite{ito2021reconfiguring} studied reconfiguration of several subgraphs in directed graphs, such as directed trees, directed paths, and directed acyclic subgraphs.
Although these reconfiguration problems are defined on directed graphs, the reconfiguration rules are symmetric, meaning that if one solution $X$ can be obtained from another solution $Y$ by applying some reconfiguration rule, $Y$ can be obtained from $X$ by the same rule as well.

\section{Preliminaries}

In this section, we introduce notations on graphs and define our problem.
For a positive integer $k$, we define $[k] \coloneqq \set{1, \dots, k}$. We also define $[0] \coloneqq \emptyset$.
\subsection{Graph notation}
Let $G = (V, A)$ be a directed graph (digraph).
We denote by $V(G)$ and $A(G)$ the vertex and arc sets of $G$, respectively. 
Let $e = (u, v) \in A$.
The vertex $u$ (resp.\ $v$) is called the \emph{tail} (resp.\ \emph{head}) of $e$.
For a digraph $G$, its \emph{underlying (undirected) graph} is an undirected graph obtained by ignoring the directions of arcs in $G$.
We denote by $\und{G}$ the underlying graph of $G$.
For an undirected graph $G'$, we denote by $V(G')$ and $E(G')$ the vertex and edge sets of $G'$, respectively.
For an undirected graph $G'$ and $S \subseteq V(G')$, $S$ is an \emph{independent set in $G'$} if, for all two different vertices $u, v \in S$, we have $\{u, v\} \notin E(G')$.
For a digraph $G$ and $S \subseteq V(G)$, $S$ is an \emph{independent set in $G$} if $S$ is an independent set in $\und{G}$.
For a digraph $G$ and $u, v \in V(G)$, $u$ and $v$ are \emph{adjacent} if they are adjacent in $\und{G}$.
For $e = (u, v) \in A(G)$, the \emph{endpoints of $e$} are $u$ and $v$.
For $e \in A(G)$ and $v \in V(G)$, $e$ is \emph{incident to $v$} if $v$ is an endpoint of $e$.
If $(u, v) \in A(G)$, $u$ is an \emph{in-neighbor of $v$ in $G$} and $v$ is an \emph{out-neighbor of $u$ in $G$}.
In addition, $u$ is a \emph{neighbor} of $v$ and $v$ is a \emph{neighbor} of $u$.
For a vertex $v$, we denote by $\outneighbor{G}{v}$ and $\inneighbor{G}{v}$ the sets of out-neighbors and in-neighbors of $v$ in $G$, respectively, that is, $\outneighbor{G}{v} \coloneqq \inset{u \in V(G)}{(v, u) \in A(G)}$ and $\inneighbor{G}{v} \coloneqq \inset{u \in V(G)}{(u, v) \in A(G)}$.
We denote by $\neighbor{G}{v}$ the set of neighbors of $v$, that is, $\neighbor{G}{v} \coloneqq \outneighbor{G}{v} \cup \inneighbor{G}{v}$.
In addition, we define $N_G[v] \coloneqq N_G(v) \cup \{v\}$.
We also define $\Gamma^+_G(v) \coloneqq \inset{(v, u)}{u \in \outneighbor{G}{v}}$, $\Gamma^-_G(v) \coloneqq \inset{(u, v)}{u \in \inneighbor{G}{v}}$, and $\Gamma_G(v) \coloneqq \Gamma^+_G(v) \cup \Gamma^-_G(v)$.
If no confusion arises, we omit the subscript $G$ from $\neighbor{G}{v}$, $\outneighbor{G}{v}$, $\inneighbor{G}{v}$, $N_G[v]$, $\Gamma^+_G(v)$, $\Gamma^-_G(v)$, and $\Gamma_G(v)$.
A \emph{directed path} $P$ is a sequence of vertices and arcs $(v_1, e_1, \dots, e_{\ell - 1}, v_\ell)$ such that, all the vertices are distinct and, for every $i \in [\ell-1]$, $e_i = (v_i, v_{i+1})$ holds.
We call $v_1$ and $v_{\ell}$ the \emph{source} and \emph{sink} of $P$, and denote by $s(P)$ and $t(P)$, respectively.
This $P$ is called a \emph{$v_1$-$v_\ell$ (directed) path} or a \emph{(directed) path from $v_1$ to $v_{\ell}$}.
The \emph{length} of $P$ is the number of arcs and we denote it by $|P|$, that is, $|P| = \ell - 1$.
For a directed path $P$ with $|P| \ge 2$, we define $s'(P) \coloneqq v_2$ and $t'(P) \coloneqq v_{\ell - 1}$.
For vertex sets $X$ and $Y$ with the same size $k$ on a digraph, we refer to a set $\mathcal{P} = \set{P_1, \dots, P_k}$ of directed paths as a \emph{directed path matching from $X$ to $Y$} if $\mathcal{P}$ have distinct sources and distinct sinks, that is, $\inset{s(P_i)}{i \in [k]} = X$ and $\inset{t(P_i)}{i \in [k]} = Y$.
Note that two sets $X$ and $Y$ may intersect in this definition.

A digraph $G$ is an \emph{oriented graph} if, for all vertices $u, v \in V(G)$, $G$ contains at most one of the possible arcs $(u, v)$ or $(v, u)$.
If $G$ contains no directed cycles, $G$ is \emph{acyclic} and such digraphs are called \emph{directed acyclic graphs (DAGs)}.
If $\und{G}$ is a tree, $G$ is a \emph{polytree}.
Similarly, if $\und{G}$ is a forest, $G$ is a \emph{polyforest}.

\subsection{Definition of \textsc{Directed Token Sliding}}
Let $\source{I}$ and $\sink{I}$ be independent sets in a digraph $G$ with $\size{\source{I}} = \size{\sink{I}}$.
A sequence $\sequence{I_0, \dots, I_\ell}$ of independent sets of $G$ is a \emph{reconfiguration sequence from $\source{I}$ to $\sink{I}$ in $G$} if $I_0 = \source{I}, I_{\ell} = \sink{I}$, and for all $i \in [\ell]$ we have $I_{i-1} \setminus I_i = \{u\}$ and $I_{i} \setminus I_{i - 1} = \{v\}$ with $(u, v) \in A$.
The \emph{length} of $\sequence{I_0, \dots, I_\ell}$ is $\ell$.
We say that \emph{$\source{I}$ is reconfigurable into $\sink{I}$ in $G$} and \emph{$\sink{I}$ is reconfigurable from $\source{I}$ in $G$} if there is a reconfiguration sequence from $\source{I}$ to $\sink{I}$.

We study the following problem:

\begin{description}
    \item[Problem] \textsc{Directed Token Sliding}
    \item[Instance] A triple $(G, \source{I}, \sink{I})$, where $G$ is a digraph and $\source{I}, \sink{I} \subseteq V(G)$ are independent sets in $G$ with $\size{\source{I}} = \size{\sink{I}}$.
    \item[Question] Is $\source{I}$ reconfigurable into $\sink{I}$?
\end{description}

In \textsc{Directed Token Sliding}, the sets $\source{I}$ and $\sink{I}$ can be seen as the initial and target positions
of ``tokens'' placed on the vertices in the sets, and then 
the problem asks whether we can move the tokens from $\source{I}$ to $\sink{I}$ by repeatedly sliding tokens along arcs keeping that the tokens are not adjacent.
The difference between our problem and \textsc{Token Sliding} is that, in the former, the input graph is directed and we can slide tokens along the arcs only in their directions.
However, since we can simulate an undirected edge with the two arcs with the opposite directions, the problem is a generalization of \textsc{Token Sliding}.
It immediately follows from the PSPACE-completeness of \textsc{Token Sliding}~\cite{HD05}
that \textsc{Directed Token Sliding} is PSPACE-complete in general.

\section{Hardness results}\label{sec:hardness}
In this section, we provide hardness results for \textsc{Directed Token Sliding}.
The first hardness result is the PSPACE-completeness of \textsc{Directed Token Sliding} for oriented graphs.
This follows from a reduction from \textsc{Token Sliding} on undirected graphs, which is PSPACE-complete~\cite{HD05}.
\begin{theorem}\label{thm:pspace-comp}
    \textsc{Directed Token Sliding} is PSPACE-complete on oriented graphs. 
\end{theorem}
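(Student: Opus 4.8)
I would first observe that \textsc{Directed Token Sliding} lies in PSPACE for the usual reason: on a digraph with $n$ vertices the configuration graph has at most $2^n$ nodes, so reachability between two configurations can be decided nondeterministically in polynomial space, hence in PSPACE by Savitch's theorem. It remains to prove PSPACE-hardness, and the plan is a polynomial-time reduction from (undirected) \textsc{Token Sliding}, which is PSPACE-complete by~\cite{HD05}. Given an instance $(H,\source{I},\sink{I})$ of \textsc{Token Sliding}, fix an arbitrary linear order on $V(H)$ and build an oriented graph $G$ as follows: keep all vertices of $H$, and for each edge $\{u,v\}\in E(H)$ with $u<v$ add a fresh vertex $q_{uv}$ together with the three arcs $(u,v)$, $(v,q_{uv})$, $(q_{uv},u)$, forming a directed triangle on $\{u,v,q_{uv}\}$. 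Between any two of $u,v,q_{uv}$ exactly one arc is present and the new vertices of distinct edges are distinct, so $G$ is an oriented graph; moreover the underlying graph of $G[V(H)]$ is exactly $H$, so $\source{I}$ and $\sink{I}$ (which contain no $q$-vertex) are independent sets in $G$. The claim is that $(H,\source{I},\sink{I})$ is a yes-instance of \textsc{Token Sliding} iff $(G,\source{I},\sink{I})$ is a yes-instance of \textsc{Directed Token Sliding}.

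For the forward direction I would simulate each token slide of $H$ along an edge $e=\{x,y\}$ by one or two slides in $G$: if the triangle of $e$ contains the arc $(x,y)$, slide the token directly along $(x,y)$; otherwise the arcs incident with $q_e$ are $(x,q_e)$ and $(q_e,y)$, so slide the token first along $(x,q_e)$ and then along $(q_e,y)$. Concatenating these simulations gives a sequence in $G$ whose checkpoints --- the configurations reached after simulating a full $H$-step --- are exactly the $H$-configurations, and in particular contain no occupied $q$-vertex. Checking validity is then routine: at a checkpoint, for any $y\in V(H)$ we have $N_G(y)=N_H(y)\cup\{q_f : y\in f\}$, the $q$-part is empty, and $N_H(y)\setminus\{x\}$ is empty whenever the corresponding $H$-slide into $y$ from $x$ is legal; the two-step simulation additionally uses only that $q_e$ has neighbourhood $\{x,y\}$ and that $y$ is empty, which again follows from legality of the $H$-slide. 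Since the checkpoints are independent in $H$ and hence in $G$, the concatenation is a valid reconfiguration sequence of $G$ from $\source{I}$ to $\sink{I}$.

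For the backward direction, take a reconfiguration sequence of $G$ from $\source{I}$ to $\sink{I}$. Two structural facts drive the argument: (i) every $q$-vertex has in-degree and out-degree $1$, so once a token reaches $q_{uv}$ its only available move is onto $u$ --- transit is irreversible; and (ii) while a token occupies $q_{uv}$, both $u$ and $v$ are empty and no token can slide onto $u$ or onto $v$, since each such slide would need $q_{uv}$ to be free. The plan is to \emph{normalise} the sequence so that every \emph{transit episode} --- a token entering some $q_{uv}$ and later leaving it --- is carried out in two consecutive steps. A normalised sequence decomposes into operations that are either a single ordinary slide along an arc of $G[V(H)]$ or an atomic transit $v\to q_{uv}\to u$, and the configurations between operations contain no occupied $q$-vertex, hence lie in $V(H)$ and are independent in $H$. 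Two consecutive such configurations differ by an ordinary slide along an edge of $H$, or by an atomic transit whose net effect is a slide along $\{u,v\}\in E(H)$; and unpacking the legality of the move $q_{uv}\to u$ at the intermediate configuration yields exactly $N_H(u)\cap K\subseteq\{v\}$ for the preceding checkpoint $K$, which is precisely legality of the slide $v\to u$ in $H$ at $K$. Hence a normalised sequence projects to a valid reconfiguration sequence of $H$ from $\source{I}$ to $\sink{I}$.

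The main obstacle is the normalisation step. A token parked on $q_{uv}$ is adjacent in $G$ only to $u$ and $v$, whereas ``pushing it back onto $v$'' or ``advancing it onto $u$'' in order to make its transit consecutive would make it block all of $N_H(v)$ respectively all of $N_H(u)$, which can invalidate an intermediate move whose target is a neighbour of $v$ or of $u$. The plan is to argue, working with a shortest reconfiguration sequence of $G$ and using (i)--(ii), that the moves occurring during a transit episode can always be reordered around it so that at least one of the two ``compressions'' succeeds; reducing instead from \textsc{Token Sliding} on bipartite graphs (PSPACE-complete by~\cite{LM19}), which forces $H$ to be triangle-free so that $u$ and $v$ have no common neighbour, should simplify the case analysis. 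Handling cleanly the transit episodes that are interleaved with, or nested inside, one another is the technical heart of the argument.
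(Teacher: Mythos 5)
Your PSPACE-membership argument and the choice of \textsc{Token Sliding} as the source problem are fine, but the gadget does not give a correct reduction, so the hardness direction fails. This is not just a matter of the normalisation step being left as a plan: the reduction as stated turns some no-instances of \textsc{Token Sliding} into yes-instances of \textsc{Directed Token Sliding}, so no normalisation argument can succeed. Take $H = C_6$ with vertices $v_1,\dots,v_6$ in cyclic order, $\source{I}=\{v_1,v_3,v_5\}$, $\sink{I}=\{v_2,v_4,v_6\}$. In $H$ every token is frozen, so this is a no-instance; moreover $C_6$ is bipartite and triangle-free, so reducing from~\cite{LM19} does not rule it out. Build $G$ with the order $v_1<\dots<v_6$, and write $q_i$ for the fresh vertex on edge $\{v_i,v_{i+1}\}$ (so for $i\le5$ the triangle is $(v_i,v_{i+1}),(v_{i+1},q_i),(q_i,v_i)$, and for $\{v_6,v_1\}$ it is $(v_1,v_6),(v_6,q_6),(q_6,v_1)$). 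Then
\[
\{v_1,v_3,v_5\}\to\{v_1,q_2,v_5\}\to\{v_1,q_2,q_4\}\to\{v_6,q_2,q_4\}\to\{v_6,v_2,q_4\}\to\{v_6,v_2,v_4\}
\]
is a valid reconfiguration sequence in $G$: each step slides along a forward arc, and each intermediate set is independent because $N_G(q_i)=\{v_i,v_{i+1}\}$ and the $q$-vertices are pairwise nonadjacent. So $(G,\source{I},\sink{I})$ is a yes-instance while $(H,\source{I},\sink{I})$ is not. The structural defect is that $q_{uv}$ has neighbourhood exactly $\{u,v\}$, strictly smaller than $N_H[u]$ and $N_H[v]$; a token parked on $q_{uv}$ therefore blocks strictly less of the graph than the same token would at $u$ or at $v$, and that slack is exactly what lets $G$ escape a deadlock that is tight in $H$.

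The paper's gadget is designed precisely so that no such small-neighbourhood vertices appear. It replaces each $v\in V(H)$ by twins $v_1,v_2$ with the arc $(v_1,v_2)$, and each edge $\{u,v\}$ by the directed $4$-cycle $(u_1,v_1),(v_1,u_2),(u_2,v_2),(v_2,u_1)$. Then $v_1$ and $v_2$ have the same closed neighbourhood, every vertex of $G'$ is a copy of a vertex of $H$, and $u_i$ is adjacent to $v_j$ in $\und{G'}$ iff $u$ is adjacent to $v$ in $H$. Consequently every independent set of $G'$ projects, by collapsing copies, to an independent set of $H$ of the same size, and every slide projects to a slide or a no-op; the backward direction is immediate and no normalisation is needed. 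If you want a gadget that stays closer to yours, the lesson is that any auxiliary vertex sitting ``on'' an edge $\{u,v\}$ must dominate at least $N_H[u]\cup N_H[v]$, otherwise soundness breaks.
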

\begin{proof}
    First, we show that \textsc{Directed Token Sliding} is in PSPACE.
    By a standard argument in reconfiguration problems, the problem belongs to PSPACE: By non-deterministically guessing the ``next solution'' in a reconfiguration sequence, the problem can be solved in non-deterministic polynomial space, while by Savitch's theorem~\cite{Sav70}, we can solve the problem in deterministic polynomial space as well.
    
    Next, we give a polynomial-time reduction from \textsc{Token Sliding} to \textsc{Directed Token Sliding} on oriented graphs.
    Let $(G, \source{I}, \sink{I})$ be an instance of \textsc{Token Sliding}.
    Then, we construct an instance $(G', \source{J}, \sink{J})$ of \textsc{Directed Token Sliding} such that $G'$ is an oriented graph as follows.
    For each vertex $v \in V(G)$, we make two vertices $v_1, v_2$ in $V(G')$.
    We add to $G'$ an arc $(v_1, v_2)$ for each $v \in V(G)$.
    For each edge $\{u, v\} \in E(G)$, we add arcs $(u_1, v_1), (v_1, u_2), (u_2, v_2)$, and $(v_2, u_1)$ to $G'$. 
    The resultant graph $G'$ is an oriented graph.
    We set $\source{J} = \inset{v_1}{v \in \source{I}}$ and $\sink{J} = \inset{v_1}{v \in \sink{I}}$.
    
    We show that $(G', \source{J}, \sink{J})$ is a yes-instance for \textsc{Directed Token Sliding} if and only if $(G, \source{I}, \sink{I})$ is a yes-instance for \textsc{Token Sliding}.
    We first show the if direction.
    Suppose that $(G, \source{I}, \sink{I})$ is a yes-instance for \textsc{Token Sliding}.
    Let $\sequence{I_0, \dots, I_{\ell}}$ be a reconfiguration sequence, where $I_0 = \source{I}$ and $I_{\ell} = \sink{I}$.
    We define $J_i = \inset{v_1}{v \in I_i}$.
    Note that $\source{J} = J_0$ and $\sink{J} = J_\ell$.
    We can reconfigure $J_i$ into $J_{i+1}$ as follows.
    Let $I_i \setminus I_{i+1} = \{u\}$ and $I_{i+1} \setminus I_i = \{v\}$.
    If $(u_1, v_1) \in A(G')$, by sliding the token on $u_1$ to $v_1$ along the arc, we obtain $J_{i+1}$.
    If $(u_1, v_1) \notin A(G')$, there are three arcs $(u_1, v_2), (v_2, u_2)$, and $(u_2, v_1)$ in $A(G')$. We can slide the token on $u_1$ to $v_1$ along these arcs and obtain $J_{i+1}$.
    Since $N_{G'}(u_1) = N_{G'}(u_2)$ and $N_{G'}(v_1) = N_{G'}(v_2)$, all intermediate sets are independent sets.
    By applying these operations inductively, we can construct a reconfiguration sequence from $\source{J} = J_0$ to $\sink{J} = J_{\ell}$.
    
    Conversely, we show the only-if direction.
    If $(G', \source{J}, \sink{J})$ is a yes-instance for \textsc{Directed Token Sliding}, 
    consider an arbitrary reconfiguration sequence from $\source{J}$ to $\sink{J}$ in $G'$.
    We can construct a reconfiguration sequence from $\source{I}$ to $\sink{I}$ in $G$ as follows.
    If a token on $v_1$ slides to $v_2$ along the arc $(v_1, v_2)$ in $G'$, we do not move the tokens on $G$.
    Otherwise, if a token in $G'$ moved from $u_i$ to $v_j$ for some $u, v \in V(G)$ and $i, j \in [2]$, and then we move the token on $u$ to $v$.
    Since two vertices $u$ and $v$ are adjacent in $G$ if and only if $u_i$ and $v_j$ are adjacent in $\und{G'}$ for $1 \le i,j\le2$, implying that the intermediate sets in $G$ are independent sets.
\end{proof}

The second hardness result is the NP-completeness and W[1]-hardness of \textsc{Directed Token Sliding} for DAGs.

\begin{theorem}\label{thm:dag}
    \textsc{Directed Token Sliding} on DAGs is NP-complete and $W[1]$-hard parameterized by the number of tokens.
\end{theorem}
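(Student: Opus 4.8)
The plan is to obtain membership in NP from a reconfiguration-length bound special to DAGs, and to obtain both NP-hardness and $W[1]$-hardness through a single polynomial-time reduction from \textsc{Independent Set} parameterized by the solution size $k$, which is NP-complete and $W[1]$-complete. For membership in NP, I would use the fact that in a DAG a token can never revisit a vertex it has left: the sequence of vertices occupied by a single token over time is a directed walk, and a directed walk that repeated a vertex would contain a directed cycle. Hence each of the $k = \size{\source{I}}$ tokens moves at most $\size{V(G)} - 1$ times, so \emph{every} reconfiguration sequence has length at most $k\,(\size{V(G)} - 1)$; a polynomially bounded certificate therefore exists and can be verified step by step, placing the problem on DAGs in NP.

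For hardness, given an instance $(H, k)$ of \textsc{Independent Set} with $V(H) = [n]$, I would build a DAG $G$ with exactly $k$ tokens: sources $a_1, \dots, a_k$, sinks $b_1, \dots, b_k$, and a selection layer with a vertex $x_{j,v}$ for each $j \in [k]$ and $v \in V(H)$; ``lane'' arcs $(a_j, x_{j,v})$ and $(x_{j,v}, b_j)$ for all $j, v$; ``conflict'' arcs $(x_{j,v}, x_{j',v'})$ for all $j < j'$ and all $v, v'$ with $v = v'$ or $\set{v,v'} \in E(H)$; and ``synchronization'' arcs $(a_{j'}, b_j)$ for all $j \ne j'$. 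The vertex order $a_1, \dots, a_k$, then the $x$-vertices layer by layer, then $b_1, \dots, b_k$ is topological, so $G$ is acyclic, and $\source{I} = \set{a_1, \dots, a_k}$ and $\sink{I} = \set{b_1, \dots, b_k}$ are easily checked to be independent. The idea is that token $j$ is forced to travel $a_j \to x_{j,v_j} \to b_j$ for some $v_j \in V(H)$, that the conflict arcs then force $\set{v_1, \dots, v_k}$ to be an independent set of $H$, and that the synchronization arcs force all $k$ tokens onto the selection layer at a common moment, at which the conflict constraint comes into play.

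For correctness, the ``if'' direction is easy: from an independent set $\set{w_1, \dots, w_k}$ of $H$, slide token $j$ from $a_j$ to $x_{j,w_j}$ for $j = 1, \dots, k$ in this order — legal because the only occupied vertices that could be adjacent to $x_{j,w_j}$ are the $x_{j',w_{j'}}$ with $j' < j$, which are non-adjacent to it exactly because $\set{w_1, \dots, w_k}$ is independent — and then slide token $j$ from $x_{j,w_j}$ to $b_j$ for $j = 1, \dots, k$, legal because by then no $a$-vertex is occupied and $b_j$ is adjacent only to $a$-vertices and to vertices of layer $j$. For the ``only if'' direction, I would first observe that an arc $(a_{j'}, b_j)$ with $j' \ne j$ is never traversable — the token attempting it occupies $a_{j'}$, itself a neighbor of $b_j$ — so that, using acyclicity, the only sink reachable from $a_k$ via traversable arcs is $b_k$; hence the $a_k$-token ends at $b_k$, and by downward induction the $a_j$-token ends at $b_j$ along the forced route $a_j \to x_{j,v_j} \to b_j$. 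Then I would inspect the first moment at which some token makes its second move, say the $a_{j_0}$-token leaving $x_{j_0,v_{j_0}}$ for $b_{j_0}$: no other token has made its second move yet, and no token $j$ can still sit on $a_j$ (that would block the move, since $a_j$ is a neighbor of $b_{j_0}$), so at that moment every token $j$ occupies $x_{j,v_j}$; as this is a configuration of pairwise non-adjacent vertices, the $v_j$ are distinct and pairwise non-adjacent in $H$, i.e., form an independent set of size $k$. The main obstacle is exactly this synchronization argument: one must choose the conflict and synchronization arcs so that $G$ remains acyclic, no ``shortcut'' arc is ever usable, and every reconfiguration sequence is driven through a configuration with all $k$ tokens on the selection layer — after which the remaining verifications are routine.
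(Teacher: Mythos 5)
Your NP-membership argument is correct and is essentially the paper's argument: in a DAG a token's trajectory is a directed path, so the length of any reconfiguration sequence is $\bigO{k\cdot|V(G)|} = \bigO{|V(G)|^2}$.

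The hardness reduction, however, has a fatal flaw. You claim that a synchronization arc $(a_{j'}, b_j)$ with $j' \neq j$ is ``never traversable'' because ``the token attempting it occupies $a_{j'}$, itself a neighbor of $b_j$.'' This reasoning is wrong on its face: when a token slides from $a_{j'}$ to $b_j$, the vertex $a_{j'}$ is vacated \emph{before} the independence check on the new configuration, so the source of the move never conflicts with its target. More importantly, the claim is false in substance, and your construction is not sound. Take $k=3$ and $H = K_n$ (which has no independent set of size $3$ for $n\ge 1$). The sequence
\[
\{a_1, a_2, a_3\} \;\to\; \{x_{1,1}, a_2, a_3\} \;\to\; \{x_{1,1}, b_3, a_3\} \;\to\; \{x_{1,1}, b_2, b_3\} \;\to\; \{b_1, b_2, b_3\}
\]
uses two synchronization arcs ($a_2 \to b_3$ and $a_3 \to b_2$) and only a single selection vertex; every intermediate set is independent because $a_j$ and $b_j$ are not adjacent in your construction and the occupied $b$-vertices at each step have no occupied neighbors. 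So the instance is a yes-instance even though $H$ has no independent set of size $3$. (For $k=2$ it is even simpler: $\{a_1,a_2\} \to \{a_2,b_2\} \to \{b_1,b_2\}$ always works.) In general, once you have cleared all but two $a$-vertices, the two remaining tokens can swap directly to $b$-vertices via the synchronization arcs, so only an independent set of size $k-2$ in $H$ is actually required.

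The root cause is that the synchronization arcs were supposed to force all $k$ tokens onto the selection layer simultaneously, but they also give the tokens a way to skip the selection layer entirely. The paper avoids this by using $k+1$ tokens rather than $k$: it takes a set $U$ of $k+1$ sources and a set $W$ of $k+1$ sinks with $U$ completely joined to $W$; the extra $(k+1)$-th token has no escape into the selection layer, so it can only cross from $U$ to $W$, and by the complete join it can do so only when \emph{all} other tokens have left $U$ and none is yet in $W$, i.e., when all $k$ of them sit in $V(G)$ and hence form a multicolored independent set. That ``guard token'' is the crucial synchronization device that your construction is missing. Both reductions start from NP-complete, $W[1]$-complete problems (you use \textsc{Independent Set}, the paper uses \textsc{Multicolored Independent Set}), so the base problem is not the issue; the gadget is.
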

\begin{proof}
    First, we show that \textsc{Directed Token Sliding} on DAGs is in NP.
    For every yes-instance $(G, \source{I}, \sink{I})$, each token visits a vertex at most once in every reconfiguration sequence from $\source{I}$ to $\sink{I}$ as $G$ is acyclic.
    Thus, the length of any reconfiguration sequence is $\bigO{|V(G)|^2}$, implying that the problem belongs to NP.
    
    To prove the NP-hardness, we perform a reduction from \textsc{Multicolored Independent Set} to \textsc{Directed Token Sliding}.
    In \textsc{Multicolored Independent Set}, given a graph $G$, an integer $k$, and a partition $\set{V_1, V_2, \ldots, V_k}$ of $V(G)$, we are asked to determine whether $G$ has a \emph{multicolored independent set}, that is, an independent set $X \subseteq V(G)$ of $G$ that contains exactly one vertex from each $V_i$.
    This problem is known to be not only NP-complete but also W$[1]$-hard parameterized by $k$~\cite{CFKLMPPS15}.
    
    Given an instance $(G, k, \set{V_1, \dots, V_k})$ of \textsc{Multicolored Independent Set}, we construct an instance $(G', \source{I}, \sink{I})$ for \textsc{Directed Token Sliding} such that $G'$ is acyclic and $\size{\source{I}} = \size{\sink{I}} = k + 1$.
    We define $V(G') = U \cup V(G) \cup W$, where $\size{U} = \size{W} = k + 1$.
    Let $U = \set{u_1, \dots, u_{k+1}}$ and $W = \set{w_1, \dots, w_{k+1}}$.
    We set $\source{I} = U$ and $\sink{I} = W$.
    We define $A(G') = A_{UW} \cup A_{UV} \cup A_{VW} \cup A_V$.
    Here, $A_{UW} = \inset{(u, w)}{u \in U, w \in W}$, $A_{UV} = \inset{(u_i, v)}{i \in [k], v \in V_i}$, and $A_{VW} = \inset{(v, w_i)}{i \in [k], v \in V_i}$.
    $A_V$ is the set of arcs obtained by orienting edges in $E(G)$ so that 1) the digraph $(V(G), A_V)$ is acyclic and, 2) for every $1 \le i < j \le k$, all the edges between $v \in V_i$ and $v' \in V_j$ must be directed from $v$ to $v'$.
    There always exists such an orientation by considering an arbitrary vertex ordering in each $V_i$ and orienting edges in $G[V_i]$ according to this vertex ordering. 
    In addition, it is easy to see that the obtained $G'$ is acyclic.
    
    Now we show that $(G', \source{I}, \sink{I})$ is a yes-instance for \textsc{Directed Token Sliding} if and only if $(G, k, \set{V_1, \dots, V_k})$ is a yes-instance for \textsc{Multicolored Independent Set}.
    Let $X = \set{v_1, \dots, v_k}$ be a multicolored independent set in $G$ such that $v_i \in V_i$ for $i \in [k]$.
    For each $i \in [k]$, we slide the token on $u_i$ to $v_i$ along the arc $(u_i, v_i) \in A_{UV}$.
    Since $X$ is an independent set of $G$ and $u_i$ is not adjacent to any vertex in $V_j$ with $j \neq i$, every intermediate position of tokens forms an independent set of $G$. 
    We then slide the token on $u_{k+1}$ to $w_{k+1}$ along arc $(u_{k+1}, w_{k+1})$.
    This move is valid since, for any $v \in V$, there are no arcs between $u_{k+1}$ and $v$ and between $w_{k+1}$ and $v$.
    Finally, for each $i \in [k]$, we slide the token on $v_i$ to $w_i$.
    These moves are also valid since, for $i \in [k]$ and $j \in [k+1]$ with $i \neq j$, there are no arcs whose endpoints are $v_i$ and $w_j$.
    Thus, there is a reconfiguration sequence from $\source{I}$ to $\sink{I}$.
    
    Conversely, suppose that there is a reconfiguration sequence from $\source{I}$ to $\sink{I}$ in $G'$.
    In any independent set contained in the reconfiguration sequence, if there is a token on $u_1, \dots, u_k$, the token on $u_{k+1}$ cannot move as $U$ and $W$ are completely joined by $A_{UW}$.
    In addition, if a token exists on $u_{k+1}$, none of the tokens on $u_1, \dots, u_k$ can move to the vertices in $W$ because $u_{k+1}$ is adjacent to all the vertices in $W$.
    Therefore, we must push the tokens on $u_1, \dots, u_k$ into $V(G)$ before moving the token on $u_{k+1}$.
    At this time, for each $i \in [k]$, the token on $u_i$ can move only to a vertex in $V_i$, and thus the tokens that are moved from $u_1, \dots, u_k$ to $V(G)$ must form a multicolored independent set of size $k$.
\end{proof}

\section{Linear-time algorithm for polytrees}\label{sec:polytree}
This section is devoted to proving our main result \cref{thm:polytree} below,
a linear-time algorithm for \textsc{Directed Token Sliding} on polytrees.

\begin{theorem}\label{thm:polytree}
    Let $T = (V, A)$ be a polytree. \textsc{Directed Token Sliding} on $T$ can be solved in $\bigO{|V|}$ time.
    Moreover, if the answer is affirmative,
    then all reconfiguration sequences have the same length,
    and
    one of them can be constructed in $\bigO{|V|^2}$ time.
\end{theorem}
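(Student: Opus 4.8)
The plan is to reduce \textsc{Directed Token Sliding} on a polytree to a purely structural condition on the instance that can be tested by a single traversal of $T$. I would start from the observation that a polytree is a DAG: since $\und{T}$ is a tree it is simple and acyclic, so $T$ has neither antiparallel arcs nor longer directed cycles. Hence, in any reconfiguration sequence, no token can return to a vertex it has already left (its trajectory would otherwise contain a directed closed walk, hence a directed cycle). Consequently each token traces out a \emph{directed path} in $T$, the sequence induces a directed path matching $\mathcal{P}=\{P_1,\dots,P_k\}$ from $\source{I}$ to $\sink{I}$, and the number of moves equals $\sum_i|P_i|$.

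Next I would show that this number depends only on $(T,\source{I},\sink{I})$, which already settles the ``all reconfiguration sequences have the same length'' part. For an arc $e=(u,v)$, deleting $e$ from $\und{T}$ splits $T$ into the component $T_u\ni u$ and the component $T_v\ni v$; since $e$ is the only link between the two sides and it points from $u$ to $v$, every token that ever crosses this cut does so exactly once and from the $T_u$-side to the $T_v$-side. Tracking the net change of the token count on the $T_v$-side then forces the number of such crossings to equal $|\sink{I}\cap V(T_v)|-|\source{I}\cap V(T_v)|$, which must be nonnegative. Thus every reconfiguration sequence has length exactly $\sum_{(u,v)\in A}\bigl(|\sink{I}\cap V(T_v)|-|\source{I}\cap V(T_v)|\bigr)$, and the inequality ``$|\sink{I}\cap V(T_v)|\ge|\source{I}\cap V(T_v)|$ for every arc $(u,v)$'' is necessary for a yes-instance.

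The heart of the proof is to show that, suitably strengthened, such conditions are also sufficient and remain locally checkable. Here I would (i) state the characterization in terms of the existence of a directed path matching whose paths can be scheduled without ever creating adjacency between tokens --- the ``certain set of directed paths'' mentioned in the introduction; (ii) use that $\und{T}$ is a tree to argue that the underlying paths of such a matching are forced once the source--sink pairing is fixed, and that any nontrivial change of the pairing would produce a directed cycle or an unavoidable deadlock, so the matching is essentially unique (an alternative route to the fixed-length claim); and (iii) push the scheduling requirement down to a small number of conditions at individual arcs/vertices --- roughly, accounting not only for the net flow across each cut but also for the ``room'' needed to let tokens pass one another, an obstruction visible already in the star $x_1\to c,\ x_2\to c,\ c\to y_1,\ c\to y_2$ with $\source{I}=\{x_1,x_2\}$ and $\sink{I}=\{y_1,y_2\}$, where the flow condition holds yet the instance is a no-instance. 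Once the characterization is phrased through such local conditions, rooting $\und{T}$ and running one DFS that maintains the relevant cut quantities verifies them in $\bigO{|V|}$ time. I expect step (iii) --- isolating the exact additional condition and proving that it, together with the flow condition, is equivalent to reconfigurability, the sufficiency direction in particular requiring one to exhibit a schedule --- to be the main obstacle.

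Finally, for the constructive claim I would show that for a yes-instance one can always find a safe move that advances some token along the (now determined) path matching without blocking the remaining tokens, apply it, and recurse; since the total number of moves is the fixed quantity above, which is $\bigO{|V|^2}$, and the bookkeeping needed to locate the next safe move costs $\bigO{|V|}$ per move, this outputs a reconfiguration sequence in $\bigO{|V|^2}$ time, which by the $\Omega(|V|^2)$ lower bound inherited from undirected paths~\cite{DDFHIOOUY15} is optimal. Making ``a safe move always exists and never leads to a later deadlock'' precise, and keeping the per-move work linear, are the points that need care in this last step.
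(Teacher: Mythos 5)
Your opening is on target: a polytree is a DAG, every token traces a directed path, the cut quantity $w(e)=|\sink{I}\cap V(T_v)|-|\source{I}\cap V(T_v)|$ (for $e=(u,v)$) is forced, nonnegativity is necessary, and the length is the fixed sum $\sum_e w(e)$ --- all of this matches the paper's Lemmas~\ref{lem:we}--\ref{lem:chara_matching} and Corollaries~\ref{cor:length}--\ref{cor:negative}. You also correctly identify the right counterexample (the in/out star) showing the flow condition alone is not sufficient. However, your step~(ii) contains an error and step~(iii) is exactly the gap: it is \emph{not} true that the source--sink pairing, hence the path matching, is essentially unique for a yes-instance. Different pairings give genuinely different directed path matchings, and the paper's sufficiency proof relies on this freedom: it takes an arbitrary matching satisfying the characterization and repeatedly \emph{exchanges} tails of two paths to destroy ``biased path pairs'' (\cref{clm:bias}), strictly decreasing $\sum|P_i|^2$. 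What is pairing-invariant is $w(e)$, not the matching, and that invariance is what yields the fixed-length claim.

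The two concrete ideas you are missing are the ones that make sufficiency and the linear-time test work. First, a preprocessing step removes \emph{rigid tokens} (tokens that never move, \cref{lem:rigid}) and \emph{blocking arcs} (tokens that move exactly once, \cref{lem:blocking_arcs}), both of which are determined by the $w$-values alone; after that, the characterization is that there exists a directed path matching $\mathcal{P}=\{P_1,\dots,P_k\}$ with $|P_i|\ge2$, \emph{distinct second vertices} $s'(P_i)$, and \emph{distinct penultimate vertices} $t'(P_i)$ (\cref{lem:path_set}). Sufficiency is then shown by normalizing $\mathcal P$ to have no biased path pair and proving that a certain touching/precedence digraph on $[k]$ is acyclic (\cref{lem:dag}), so a topological order gives a path-by-path schedule --- this is precisely the ``safe move always exists and never deadlocks'' claim you leave open. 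Second, the $\bigO{|V|}$ check does not search over matchings at all; it is equivalent (\cref{lem:ns_path_set}) to finding injective maps $f\colon\source I\to N^+(\source I)$ and $g\colon\sink I\to N^-(\sink I)$ with $(s,f(s))\in A$, $(g(t),t)\in A$ and $w(e;f(\source I),g(\sink I))\ge0$ for all $e$, which a single bottom-up greedy pass computes. Your proposal is a sound outline of the necessary half and of the overall ambition, but the characterization, its sufficiency proof, and the reformulation that makes it checkable in $\bigO{|V|}$ time are all still to be supplied, and the ``unique matching'' shortcut would steer you the wrong way.
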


\subsection{Directed path matching}
Let $\source{I}$ and $\sink{I}$ be independent sets in $T$ with $|\source{I}| = |\sink{I}|$ such that $\source{I}$ is reconfigurable into $\sink{I}$.
In a reconfiguration sequence from $\source{I}$ to $\sink{I}$, the $i$-th token moves along some directed path $P_i$.
Clearly $\mathcal{P} \coloneqq \set{P_1, \dots, P_k}$ forms a directed path matching from $\source{I}$ to $\sink{I}$.
Thus, the existence of a directed path matching is a trivial necessary condition for yes-instances.
However, the converse is not true in general.
Let us consider a digraph such that its underlying graph is the star with four leaves and its center has two in-neighbors and two out-neighbors.
We set $\source{I}$ (resp. $\sink{I}$) to be the set of in-neighbors (resp. out-neighbors) of the center.
Then this graph has a directed path matching from $\source{I}$ to $\sink{I}$, while it is a no-instance.
Nevertheless, directed path matchings still give us some insight on \textsc{Directed Token Sliding} for polytrees, which is vital for our linear-time algorithm.
To this end, in this subsection, we give a characterization of the existence of a directed path matching between given two sets of vertices in a polytree.
This characterization is described in terms of an invariant associated with each arc $e$.

Let $X$ and $Y$ be (not necessarily disjoint) sets of vertices of a polytree $T = (V, A)$ with $|X| = |Y|$,
and $\pi$ a bijection from $X$ to $Y$.
Since $T$ is a polytree,
for each $x \in X$ an (undirected) $x$-$\pi(x)$ path $P_x$ is uniquely determined in $\und{T}$.
For each $e \in A$,
we define $w(e; X, Y, \pi) \coloneqq |\{ x \in X \mid \overrightarrow{P_x} \text{ has } e \}| - |\{ x \in X \mid \overrightarrow{P_x} \text{ has the reverse of } e \}|$,
where $\overrightarrow{P_x}$ is a directed path obtained from $P_x$ by orienting arcs from $x$ to $\pi(x)$.
The following lemma states that $w(e; X, Y, \pi)$ does not depend on a particular $\pi$.
Let $T'$ be the polyforest obtained from $T$ by removing an arc $e$.
Let $C_e^+$ (resp.\ $C_e^-$) denote the vertices of the (weakly) connected component in $T'$ containing the head of $e$ (resp.\ the tail of $e$).
\begin{lemma}\label{lem:we}
    Let $X$ and $Y$ be (not necessarily disjoint) sets of vertices of $T$ with $|X| = |Y|$,
    and $e \in A$ an arc of $T$.
    For each bijection $\pi : X \to Y$,
    we have $w(e; X, Y, \pi) = |C_e^- \cap X| - |C_e^- \cap Y|$.
\end{lemma}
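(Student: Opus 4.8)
The plan is to compute the quantity $w(e; X, Y, \pi)$ by tracking, for each $x \in X$, whether the unique $x$-$\pi(x)$ path in $\und{T}$ crosses the arc $e$ and in which direction, and to argue that this contribution depends only on which side of $e$ the endpoints $x$ and $\pi(x)$ lie. Fix the arc $e = (a, b)$, so that $a$ is its tail and $b$ its head, and recall that removing $e$ from $T$ splits it into the two weakly connected components $C_e^-$ (containing $a$) and $C_e^+$ (containing $b$), which partition $V$. The key observation is that for any two vertices $p, q \in V$, the unique $p$-$q$ path in the underlying tree $\und{T}$ uses the edge underlying $e$ if and only if $p$ and $q$ lie in different components of $T - e$; and when it does, it traverses that edge from the $C_e^-$-side to the $C_e^+$-side exactly when $p \in C_e^-$ and $q \in C_e^+$.

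Applying this to $p = x$ and $q = \pi(x)$: the oriented path $\overrightarrow{P_x}$ contains $e$ itself precisely when $x \in C_e^- \cap X$ and $\pi(x) \in C_e^+$, and it contains the reverse of $e$ precisely when $x \in C_e^+ \cap X$ and $\pi(x) \in C_e^-$. Summing over $x \in X$, the first term of $w(e; X, Y, \pi)$ counts the number of $x \in X$ with $x \in C_e^-$ and $\pi(x) \in C_e^+$, and the second term counts the number of $x \in X$ with $x \in C_e^+$ and $\pi(x) \in C_e^-$. Now I would split $|C_e^- \cap X|$ according to whether $\pi(x)$ lands in $C_e^-$ or $C_e^+$: writing $\alpha = |\{x \in X : x \in C_e^-, \pi(x) \in C_e^-\}|$ and $\beta = |\{x \in X : x \in C_e^-, \pi(x) \in C_e^+\}|$, we have $|C_e^- \cap X| = \alpha + \beta$. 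Similarly, since $\pi$ is a bijection onto $Y$, the set $C_e^- \cap Y$ is hit either by an $x \in C_e^-$ (counted by $\alpha$) or by an $x \in C_e^+$ (counted by $\gamma := |\{x \in X : x \in C_e^+, \pi(x) \in C_e^-\}|$), so $|C_e^- \cap Y| = \alpha + \gamma$. Therefore $|C_e^- \cap X| - |C_e^- \cap Y| = \beta - \gamma$, which is exactly $w(e; X, Y, \pi)$ by the identification above. This completes the argument, and in particular shows the value is independent of the choice of $\pi$.

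I do not expect a serious obstacle here; the proof is essentially a careful bookkeeping of a partition of $X$ into four classes according to the sides of $x$ and $\pi(x)$. The one point deserving care is the basic tree fact used at the start — that in a tree, the path between two vertices crosses a given edge iff the two vertices are separated by that edge, and crosses it in the direction from the component of the first vertex to the component of the second — but this is standard and can be stated as a one-line observation before the main computation. The rest is just substituting and cancelling the common term $\alpha$.
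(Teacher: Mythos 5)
Your proof is correct and follows essentially the same approach as the paper's: identify exactly when $\overrightarrow{P_x}$ contains $e$ or its reverse in terms of which side of $e$ the endpoints $x$ and $\pi(x)$ lie, then count. The paper states the final equality $|\{x : x \in C_e^-, \pi(x) \in C_e^+\}| - |\{x : x \in C_e^+, \pi(x) \in C_e^-\}| = |C_e^- \cap X| - |C_e^- \cap Y|$ directly, whereas you spell out the cancellation explicitly via $\alpha, \beta, \gamma$; this is just a more detailed rendering of the same bookkeeping.
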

\begin{proof}
    For each $x \in X$,
    the $x$-$\pi(x)$ path $\overrightarrow{P_x}$ uses $e$ if and only if $x \in C_e^-$ and $\pi(x) \in C_e^+$,
    and $\overrightarrow{P_x}$ uses the reverse of $e$ if and only if $x \in C_e^+$ and $\pi(x) \in C_e^-$.
    Therefore we have
    \begin{align*}
        w(e; X, Y, \pi) &= |\{ x \in X \mid x \in C_e^- \text{ and } \pi(x) \in C_e^+ \}| - |\{ x \in X \mid x \in C_e^+ \text{ and } \pi(x) \in C_e^- \}|\\
        &= |C_e^- \cap X| - |C_e^- \cap Y|.\qedhere
    \end{align*}
\end{proof}

Based on this fact, we define a function $w$ on $A$ with respect to two vertex sets $X$ and $Y$:
\[
     w(e; X, Y) \coloneqq |C_e^- \cap X| - |C_e^- \cap Y| \qquad (e \in A).
\]
Then, we show the necessary and sufficient conditions mentioned above.
\begin{lemma}\label{lem:chara_matching}
    There exists a directed path matching from $X$ to $Y$ if and only if
    $w(e; X, Y) \geq 0$ for every arc $e \in A$.
\end{lemma}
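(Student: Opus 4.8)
The plan is to prove the two directions separately, the forward (necessity) direction being almost immediate and the backward (sufficiency) direction requiring an explicit construction of the matching. For the forward direction, suppose a directed path matching $\mathcal{P} = \{P_1, \dots, P_k\}$ from $X$ to $Y$ exists. Let $\pi$ be the bijection sending $s(P_i)$ to $t(P_i)$. Each directed path $P_i$ in $T$, being a path in the polytree, is the unique undirected $s(P_i)$-$\pi(s(P_i))$ path oriented so that it is actually a directed path; in particular, for every arc $e$ on $P_i$, the path traverses $e$ in the forward direction and never in the reverse direction. Hence every term counted negatively in $w(e; X, Y, \pi)$ vanishes, so $w(e; X, Y, \pi) \geq 0$ for all $e \in A$. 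By \cref{lem:we}, $w(e; X, Y) = w(e; X, Y, \pi) \geq 0$, which is the claimed condition.

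For the sufficiency direction, I would argue by induction on $|A|$ (equivalently on $|V|$), assuming $w(e; X, Y) \geq 0$ for every arc $e$. If $|X| = 0$ the empty matching works, so assume $|X| = k \geq 1$. The base/reduction step is to find a single directed path that can be "peeled off." Concretely, pick a vertex $x \in X$ and try to route its token: I want to find a directed path from some $x \in X$ to some $y \in Y$ such that, after deleting this path's contribution, the remaining instance $(X', Y')$ still satisfies the nonnegativity condition on the smaller structure. The natural candidate is to look at a leaf of $\und{T}$ or, better, to walk greedily: start at a vertex of $X$, and at each step follow an out-arc $e$ with $w(e; X, Y) > 0$; such an arc exists as long as we have not yet reached a vertex of $Y$, because the "flow conservation" identity at an internal vertex $v$ — namely $\sum_{e \in \Gamma^+(v)} w(e; X, Y) - \sum_{e \in \Gamma^-(v)} w(e; X, Y) = |\{v\}\cap X| - |\{v\}\cap Y|$ (which follows directly from the $C_e^-$ formula) — forces a positive out-flow somewhere once we enter $v$ with positive flow and $v \notin Y$ (or $v$ still needs to "emit" tokens). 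Following such arcs, acyclicity of the trace in a tree guarantees we terminate at some $y \in Y$; this yields a directed $x$-$y$ path $P$ all of whose arcs $e$ have $w(e;X,Y) \ge 1$. I then set $X' = X \setminus \{x\}$ (as a multiset, or handle the $x = $ repeated-vertex case with care), $Y' = Y \setminus \{y\}$, observe that $w(e; X', Y') = w(e; X, Y) - [e \in P] \geq 0$ for every arc, and apply the induction hypothesis to get a matching $\mathcal{P}'$; then $\mathcal{P}' \cup \{P\}$ is the desired matching from $X$ to $Y$, after checking the sources and sinks are distinct (which holds since we removed $x$ and $y$).

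The main obstacle I anticipate is the bookkeeping around the fact that $X$ and $Y$ need not be disjoint, so "removing a vertex" must be treated multiset-style, and the greedy walk must be shown to genuinely terminate at a vertex of $Y$ rather than looping or getting stuck — this is where the flow-conservation identity derived from \cref{lem:we} does the real work, and I would state and prove that identity as an intermediate claim. A secondary subtlety is the very first step of the walk: the starting vertex $x \in X$ might already lie in $Y$, in which case the trivial length-zero path $P = (x)$ is used and $x$ is matched to itself; one must check this is consistent with the condition, i.e., that one does not create a negative $w$ elsewhere, which is automatic since a length-zero path uses no arcs. Once these two points are handled, the induction goes through cleanly and also makes the characterization algorithmically constructive, which will be useful later in \cref{sec:polytree}.
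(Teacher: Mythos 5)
Your proof is correct, but the sufficiency direction takes a genuinely different route from the paper. The paper inducts on $\sum_{e \in A} w(e; X, Y)$: it locates a connected component $T'$ of the arcs with positive $w$, picks a vertex $v$ of $T'$ with no in-arcs in $T'$, argues via a balance condition ($|V(T') \cap X| = |V(T') \cap Y|$) that $v \in X$, and then advances $v$ by a single arc into $T'$, decreasing the sum by exactly one. You instead induct on $|X|$ and peel off an entire directed $x$-$y$ path in one step, relying on the flow-conservation identity $\sum_{e \in \Gamma^+(v)} w(e; X, Y) - \sum_{e \in \Gamma^-(v)} w(e; X, Y) = |\{v\}\cap X| - |\{v\}\cap Y|$ (which follows from \cref{lem:we} and is worth stating as a standalone claim) to ensure the greedy walk along positive-$w$ arcs cannot get stuck before reaching $Y$, together with the absence of antiparallel arcs in a polytree to rule out backtracking. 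Both arguments are sound; the paper's single-arc reduction keeps the bookkeeping minimal (it only ever moves one multiset element one step), whereas your whole-path reduction is closer in spirit to the path-construction procedure the paper actually uses later in \cref{sec:construct_fast}, which also routes one token at a time along positive-$w$ arcs, and it has the advantage of staying in the world of sets rather than multisets. One small slip: you announce ``induction on $|A|$ (equivalently on $|V|$),'' but neither of these quantities decreases in your reduction; you remove one element from each of $X$ and $Y$, so the induction variable should be $|X|$. With that corrected, the rest -- including your computation $w(e; X', Y') = w(e; X, Y) - [e \in P]$ and your handling of the degenerate case $x \in X \cap Y$ via a length-zero path -- is fine.
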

\begin{proof}
    Let $e \in A$ be arbitrary.
    Suppose that  $|C^-_e \cap X| < |C^-_e \cap Y|$.
    Then, a directed path matching has at least one directed path from a vertex in $C^+_e \cap X$ to a vertex in $C^-_e \cap Y$, contradicting the direction of $e$.
    
    Conversely, suppose that $w(e; X, Y) \ge 0$ for every $e \in A$.
    Then, we claim that there is a directed path matching from $X$ to $Y$.
    We show the claim by induction on $\sum_{e \in A} w(e; X, Y)$.
    To facilitate our proof, we slightly abuse the notation $w$ and directed path matchings.
    We naturally extend the definition of $w(e; X, Y)$ for multisets $X$ and $Y$, where $|C^-_e \cap X|$ (resp.\ $|C^-_e \cap Y|$) is defined as the sum of the multiplicity of each distinct element in $C^-_e \cap X$ (resp.\ $C^-_e \cap Y$).
    For multisets $X = \{x_1, x_2, \ldots, x_k\}$ and $Y = \{y_1, y_2, \ldots, y_k\}$, a set of paths $\set{P_1, \dots, P_k}$ is a directed path matching if there exists a bijection $\pi$ from $[k]$ to $[k]$ and $P_i$ is a directed path from $x_i$ to $y_{\pi(i)}$.
    
    Suppose that $\sum_{e\in A} w(e; X, Y) = 0$.
    Since $w(e; X, Y) \ge 0$ for each $e \in A$, we have $X = Y$ and hence the claim holds.
    Suppose that $\sum_{e\in A} w(e; X, Y) > 0$.
    Since $w(e; X, Y) \ge 0$ for $e \in A$, there is an arc $e^* \in A$ with $w(e^*; X, Y) > 0$.
    Let $T'$ be the (weakly) connected component induced by arcs $e'$ with $w(e'; X, Y) > 0$ that contains $e^*$.
    $|V(T') \cap X| = |V(T') \cap Y|$ holds, as otherwise there is an arc $e'$ incoming to $T'$ or outgoing from $T'$ with $w(e'; X, Y) > 0$, contradicting the definition of $T'$.
    Choose a vertex $v$ with $N^-_{T'}(v) = \emptyset$. 
    Since the sum of in-degrees in $T'$ is $|V(T')| - 1$, such a vertex exists.
    Since $|V(T')| \geq 2$ by the existence of $e^*$, $N^+_{T'}(v)$ is nonempty.
    If $v \notin X$, there exists some vertex $u \in N^+_{T'}(v)$ such that $|C^+_{(v, u)} \cap X| \geq |C^+_{(v, u)} \cap Y|$ as otherwise $|V(T') \cap X| < |V(T') \cap Y|$.
    This implies $w((v, u); X, Y) \leq 0$, contradicting the assumption on $T'$.
    Thus, we have $v \in X$.
    Let $u \in N^+_{T'}(v)$.
    Then $w(e'; X \setminus \{v\} \cup \{u\}, Y) = w(e'; X, Y)$ for each $e' \in A \setminus \{(v, u)\}$ and $w((v, u); X \setminus \{v\} \cup \{u\}, Y) = w((v, u); X, Y) - 1 \ge 0$.
    (Note that for a multiset $X$, $X \setminus \{v\}$ means that we decrease the multiplicity of $v$ in $X$ by $1$.)
    By induction hypothesis, there is a directed path matching from $X \setminus \set{v} \cup \set{u}$ to $Y$, say $\mathcal{P'}$. 
    Then appending $v$ before the source of $P' \in \mathcal{P}'$ with $s(P') = u$ yields a directed path matching from $X$ to $Y$.
\end{proof}

By \cref{lem:we,lem:chara_matching}, we can observe the following corollaries.
\begin{corollary}\label{cor:length}
    For a yes-instance, the number of tokens passing through an arc $e$ is equal to $w(e; \source{I}, \sink{I})$ in every reconfiguration sequence.
    In particular, all reconfiguration sequences have the same length $\sum_{e \in A} w(e; \source{I}, \sink{I})$.
\end{corollary}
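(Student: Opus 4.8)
The plan is to fix an arbitrary reconfiguration sequence $\sequence{I_0, \dots, I_\ell}$ from $\source{I}$ to $\sink{I}$ and analyze the tokens one by one. Label the $k$ tokens; let $a_i \in \source{I}$ be the vertex on which token $i$ starts and $b_i \in \sink{I}$ the vertex on which it ends. Each step of the sequence slides exactly one token along exactly one arc in its direction, so the vertices successively occupied by token $i$ form a directed walk $W_i$ in $T$ from $a_i$ to $b_i$ (each consecutive pair being an arc of $T$), and the number of steps of the sequence is $\ell = \sum_{i \in [k]} |W_i|$.

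The key step is to show that each $W_i$ is a simple directed path. Here the polytree hypothesis is essential: since $\und{T}$ is a simple tree, between any two adjacent vertices of $T$ there is exactly one arc, so $T$ has no antiparallel arcs. If $W_i$ revisited a vertex it would contain a nonempty closed subwalk, and any nonempty closed walk in a tree traverses some edge $\{x,y\}$ in both directions (deleting $\{x,y\}$ disconnects $T$, and the walk returns to the side it started on, so it crosses $\{x,y\}$ an even and positive number of times); since $W_i$ respects arc directions, this would force both $(x,y)$ and $(y,x)$ to be arcs of $T$, a contradiction. Hence $W_i$ is a directed path $P_i$ from $a_i$ to $b_i$ that uses each arc at most once; in particular exactly $n_e := |\{\, i \in [k] : P_i \text{ has } e \,\}|$ tokens pass through any given arc $e$, and $\ell = \sum_{i \in [k]} |P_i| = \sum_{e \in A} n_e$.

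It remains to identify $n_e$ with $w(e ; \source{I}, \sink{I})$. The map $i \mapsto b_i$ is a bijection $\pi \colon \source{I} \to \sink{I}$, and since $\und{T}$ is a tree the undirected $a_i$--$b_i$ path is unique, so the directed path $\overrightarrow{P_{a_i}}$ occurring in \cref{lem:we} coincides with our $P_i$. Because no $P_i$ traverses the reverse of an arc of $T$, the definition of $w$ gives $w(e ; \source{I}, \sink{I}, \pi) = n_e - 0 = n_e$; then \cref{lem:we} yields $n_e = |C_e^- \cap \source{I}| - |C_e^- \cap \sink{I}| = w(e ; \source{I}, \sink{I})$, a value independent of the chosen sequence. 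Combining with the previous paragraph, $\ell = \sum_{e \in A} w(e ; \source{I}, \sink{I})$, so all reconfiguration sequences have the same length.

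The only genuine obstacle is the simple-path claim for the token trajectories; everything after that is bookkeeping together with a direct appeal to \cref{lem:we}. One minor subtlety deserves an explicit word: the tokens are indistinguishable, so the bijection $\pi$ has to be extracted from the chosen sequence and need not be the identity even when $\source{I} \cap \sink{I} \neq \emptyset$ --- but \cref{lem:we} states precisely that $w$ does not depend on $\pi$, so this causes no trouble.
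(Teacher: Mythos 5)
Your proof is correct and takes essentially the same approach as the paper, which presents the corollary as an immediate consequence of \cref{lem:we}. The one point you usefully make explicit---that each token's trajectory is a simple directed path, since a polytree has no antiparallel arcs and hence admits no nonempty closed directed walk---is asserted by the paper without proof in the paragraph preceding \cref{lem:we}.
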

\begin{corollary}\label{cor:negative}
    If there exists an arc $e \in A$ such that $w(e; \source{I}, \sink{I}) < 0$, then the instance is not reconfigurable.
\end{corollary}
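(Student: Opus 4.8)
The plan is to obtain \cref{cor:negative} as an immediate consequence of \cref{lem:chara_matching}, using only the elementary observation—already recorded at the start of this subsection—that reconfigurability forces the existence of a directed path matching from $\source{I}$ to $\sink{I}$. So the whole argument is a short contrapositive.

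First I would set up the necessity direction. Suppose, for contradiction, that $\source{I}$ is reconfigurable into $\sink{I}$, and fix any reconfiguration sequence $\sequence{I_0, \dots, I_\ell}$ with $I_0 = \source{I}$ and $I_\ell = \sink{I}$. For each token $i \in [k]$, record the directed walk it traverses through the sequence; its two endpoints are the vertex of $\source{I}$ on which it starts and the vertex of $\sink{I}$ on which it ends. From this directed walk I extract a simple directed path $P_i$ with the same endpoints, using the standard fact that every directed walk from $a$ to $b$ contains a directed path from $a$ to $b$ (repeatedly delete any cycle; the endpoints are preserved). Since the tokens occupy the distinct vertices of $\source{I}$ at the start and the distinct vertices of $\sink{I}$ at the end, the collection $\set{P_1, \dots, P_k}$ has distinct sources forming $\source{I}$ and distinct sinks forming $\sink{I}$, hence is a directed path matching from $\source{I}$ to $\sink{I}$.

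Then I would invoke \cref{lem:chara_matching}: the existence of such a directed path matching implies $w(e; \source{I}, \sink{I}) \ge 0$ for every arc $e \in A$, contradicting the hypothesis that $w(e; \source{I}, \sink{I}) < 0$ for some $e$. Therefore the instance is not reconfigurable, which is exactly the statement. (Alternatively, the same conclusion follows by short-circuiting through \cref{cor:length}: in a yes-instance the number of tokens passing through $e$ equals $w(e; \source{I}, \sink{I})$, and this count is by definition nonnegative.) I do not expect a genuine obstacle here—this is a one-line corollary—the only point needing a sentence of care being the passage from a token's trajectory (a directed walk, which on a polytree with antiparallel arcs need not already be simple) to an honest directed path, which is the routine walk-to-path reduction noted above.
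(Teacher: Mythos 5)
Your argument is correct and is essentially the paper's: reconfigurability yields a directed path matching from $\source{I}$ to $\sink{I}$, \cref{lem:chara_matching} then gives $w(e;\source{I},\sink{I}) \ge 0$ for every arc, and the corollary is the contrapositive. The one caveat you raise is moot here, though: the paper explicitly disallows antiparallel arcs in polytrees, so on a polytree a token's trajectory is automatically a simple directed path and the walk-to-path reduction is unnecessary.
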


In the following argument, we assume $w(e; \source{I}, \sink{I}) \geq 0$ for all $e \in A$.
By depth-first search on $T$, we can compute the function $w(e; \source{I}, \sink{I})$ for all $e$ in linear time. 

\subsection{Tokens that move at most once}
In a reconfiguration sequence, there may be a token that moves at most once.
In other words, a token may not move at all or may move from the initial position to one of its out-neighbors and stay there.
Such a token causes an exception in our further discussion, and thus we want to remove such tokens in advance.
In this subsection, we show that such tokens are determined regardless of reconfiguration sequences, and that we can remove such tokens from the input without changing the reconfigurability.

First, we consider tokens that never move.
The following lemma states that such tokens are uniquely determined from the instance $(T, \source{I}, \sink{I})$ (not depending on an actual reconfiguration sequence).
From now on, we simply write $w(e)$ to denote $w(e; \source{I}, \sink{I})$.

\begin{lemma}\label{lem:rigid}
    Let $(T, \source{I}, \sink{I})$ be a yes-instance.
    For every reconfiguration sequence,
    the set of vertices containing tokens that do not move in the sequence is equal to
    \[
        R \coloneqq \inset{v \in \source{I} \cap \sink{I}}{w(e) = 0 \mbox{ for all } e \in \Gamma(v)}.
    \]
\end{lemma}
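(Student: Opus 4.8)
The statement has two directions: (i) every vertex in $R$ holds a token that never moves in any reconfiguration sequence, and (ii) every vertex holding a never-moving token lies in $R$. I would prove both using the quantitative invariant from \cref{cor:length}: the number of tokens passing through an arc $e$ is exactly $w(e)$, independent of the sequence.

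For direction (i), let $v \in R$, so $v \in \source{I} \cap \sink{I}$ and $w(e) = 0$ for all $e \in \Gamma(v)$. Fix any reconfiguration sequence. By \cref{cor:length}, no token ever traverses any arc incident to $v$; in particular, no token can move onto $v$ along an in-arc, and no token sitting on $v$ can leave along an out-arc. Since $v \in \source{I}$, there is a token on $v$ initially, and since nothing can enter or leave $v$, that token stays put for the entire sequence (and $v$ is correctly occupied at the end since $v \in \sink{I}$). This direction is essentially immediate from \cref{cor:length}.

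For direction (ii), suppose a token on some vertex $v$ never moves in a given reconfiguration sequence; I must show $v \in R$. Since the token stays on $v$, we have $v \in \source{I} \cap \sink{I}$ (it occupies $v$ both at the start and the end). It remains to show $w(e) = 0$ for every $e \in \Gamma(v)$. Here the key point is that a token parked permanently on $v$ blocks all of $v$'s neighbors from ever being occupied, so \emph{no} token can traverse any arc incident to $v$ in this particular sequence — if some token crossed $e \in \Gamma(v)$, at some step it would sit on the neighbor endpoint of $e$, adjacent to the stationary token on $v$, violating independence. Hence in this sequence the number of tokens passing through each $e \in \Gamma(v)$ is $0$, and by \cref{cor:length} (which says this count equals $w(e)$ and is the same in every sequence) we get $w(e) = 0$ for all $e \in \Gamma(v)$, so $v \in R$.

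The main subtlety to get right is the independence/blocking argument in direction (ii): one should be careful that a token crossing an arc $e = (u,v)$ or $e=(v,u)$ really must, at some intermediate configuration, occupy the endpoint other than $v$, which is adjacent to $v$ in $\und{T}$ and therefore cannot coexist with the stationary token on $v$. This is where the token-sliding rule (exchanged vertices are adjacent) and the definition of reconfiguration sequence are used. Everything else reduces to quoting \cref{cor:length}; no induction or case analysis on the tree structure is needed beyond this local observation.
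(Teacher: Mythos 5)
Your proof is correct and follows essentially the same approach as the paper: direction (i) is immediate from \cref{cor:length}, and direction (ii) uses the same blocking argument (a token passing through $e \in \Gamma(v)$ would have to occupy the endpoint of $e$ adjacent to $v$, violating independence with the stationary token). The paper's version is terser but relies on exactly the same two ingredients.
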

\begin{proof}
    Since for each $e \in A$, $w(e)$ equals the number of tokens that pass through $e$ by \cref{cor:length}, the token on $v \in R$ does not move.
    Moreover, if $v \notin R$, there exists an arc $e \in \Gamma(v)$ with $w(e) > 0$.
    Then, in every reconfiguration sequence, there is a token passing through $e$, implying that the token placed initially on $v$ must move to one of its out-neighbors.
\end{proof}

A token on a vertex $v \in R$ is said to be \emph{rigid}.
By~\cref{lem:rigid},
all rigid tokens do not move in any reconfiguration sequence.

For some $v \in R$, suppose that there exists $u \in N(v)$ such that $T$ has an arc $e \in \Gamma(u)$ with $w(e) > 0$.
If $(T, \source{I}, \sink{I})$ is a yes-instance, in any reconfiguration sequence, some token passes through $e$, implying that this token is placed on $u$ at some point.
However, since the token on $v$ is rigid, these tokens must be adjacent.
Thus, we obtain the following corollary.

\begin{corollary}\label{lem:rigid_exception}
    For $v \in R$, if there exists $u \in N(v)$ such that $T$ has an arc $e \in \Gamma(u)$ with $w(e) > 0$, then the instance $(T, \source{I}, \sink{I})$ is a no-instance.
\end{corollary}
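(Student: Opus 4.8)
The plan is to argue by contradiction, combining \cref{lem:rigid} with \cref{cor:length}. Suppose $(T, \source{I}, \sink{I})$ is a yes-instance and fix an arbitrary reconfiguration sequence $\sequence{I_0, \dots, I_\ell}$ from $\source{I}$ to $\sink{I}$. Since $v \in R$, \cref{lem:rigid} tells us that the token initially placed on $v$ never moves in this sequence; in particular $v \in I_i$ for every $i \in \{0, \dots, \ell\}$.

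Next I would exploit the hypothesis $w(e) > 0$ via \cref{cor:length}, which asserts that the number of tokens passing through $e$ equals $w(e)$ in every reconfiguration sequence. Hence at least one token traverses $e$ at some step. Since $e \in \Gamma(u)$, the arc $e$ is incident to $u$: either $e \in \Gamma^+(u)$, i.e.\ $e = (u, x)$ for some out-neighbor $x$ of $u$, in which case the token traversing $e$ occupies $u$ in the configuration immediately before the corresponding slide; or $e \in \Gamma^-(u)$, i.e.\ $e = (x, u)$, in which case that token occupies $u$ immediately after. In either case there is an index $j$ with $u \in I_j$.

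Finally, for this index $j$ we have both $u \in I_j$ and $v \in I_j$. But $u \in N(v)$, so $u$ and $v$ are adjacent in $T$ (and in particular $u \neq v$), contradicting the fact that $I_j$ is an independent set. Therefore no reconfiguration sequence from $\source{I}$ to $\sink{I}$ exists, i.e.\ $(T, \source{I}, \sink{I})$ is a no-instance.

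I do not expect a genuine obstacle here: the statement is essentially an immediate consequence of \cref{lem:rigid} and \cref{cor:length}. The only point requiring a little care is the short case analysis showing that any token traversing an arc incident to $u$ must sit on $u$ in some intermediate configuration — which merely unwinds the definitions of $\Gamma^+(u)$, $\Gamma^-(u)$, and of a single reconfiguration step — together with keeping straight that the rigid token guaranteed by $v \in R$ is the one forced into adjacency with the token that must pass through $u$.
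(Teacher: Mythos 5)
Your proof matches the paper's argument: both invoke \cref{lem:rigid} to keep the rigid token on $v$ throughout, and \cref{cor:length} to force some token through $e$, hence onto $u$, contradicting independence. The only difference is that you spell out the case split on $e \in \Gamma^+(u)$ versus $e \in \Gamma^-(u)$ explicitly, which the paper leaves implicit.
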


Next, we consider tokens that move exactly once.
We can show that the set of arcs used by such tokens is uniquely determined from the instance $(T, \source{I}, \sink{I})$ (not depending on an actual reconfiguration sequence)
by a similar argument as in \cref{lem:rigid}.

\begin{lemma}\label{lem:blocking_arcs}
    Let $(T, \source{I}, \sink{I})$ be a yes-instance.
    For every reconfiguration sequence,
    the set of arcs used by tokens that move exactly once in the sequence
    is equal to
    \begin{align*}
        B \coloneqq \left\{e = (u, v) \in A \mid w(e) = 1 \text{ and } w(e') = 0
          \text{ for every } e' \in (\Gamma(u) \cup \Gamma(v)) \setminus \{e\}
         \right\}.
    \end{align*}
\end{lemma}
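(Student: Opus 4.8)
The plan is to establish the two inclusions ``$B\subseteq$ (set of arcs used by tokens that move exactly once)'' and ``(set of arcs used by tokens that move exactly once) $\subseteq B$'' separately, in both cases relying on \cref{cor:length}: for a yes-instance, $w(e)$ equals the number of slides performed along $e$, and this number is the same in every reconfiguration sequence. The basic observation underlying both directions is that if a token $\tau$ moves exactly once in a sequence $\sequence{I_0,\dots,I_\ell}$, performing its single slide along an arc $e=(u,v)$ at some step $t^*$, then $\tau$ sits at $u$ in all of $I_0,\dots,I_{t^*-1}$ and at $v$ in all of $I_{t^*},\dots,I_\ell$; in particular $u\in\source{I}$, $v\in\sink{I}$, and no token other than $\tau$ ever occupies $u$ or $v$ (while $\tau$ is at $u$ the vertex $v$ is forbidden to all tokens by adjacency, and symmetrically once $\tau$ is at $v$).

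For the inclusion $B\subseteq(\cdots)$, I would fix $e=(u,v)\in B$. Since $w(e)=1$, exactly one token $\tau$ slides along $e$, and since $w(e')=0$ for every $e'\in(\Gamma(u)\cup\Gamma(v))\setminus\{e\}$, no slide is ever performed along such an $e'$. Hence $\tau$ cannot have reached $u$ via an incoming arc of $u$, so $\tau$ never moves before step $t^*$ and $u\in\source{I}$; symmetrically, after step $t^*$ no arc leaving $v$ is used, so $\tau$ never moves again and $v\in\sink{I}$. Thus $\tau$ moves exactly once, along $e$. For the reverse inclusion, let $\tau$ move exactly once, along $e=(u,v)$. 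By the basic observation no token other than $\tau$ ever occupies $u$ or $v$, so every slide along an arc incident to $u$ or to $v$ is performed by $\tau$, and is therefore exactly its single slide along $e$. Consequently $w(e')=0$ for every $e'\in(\Gamma(u)\cup\Gamma(v))\setminus\{e\}$ and $w(e)=1$, i.e.\ $e\in B$.

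I do not expect a serious obstacle; the argument is a careful bookkeeping analogue of the proof of \cref{lem:rigid}. The one point that needs attention is the interval structure of $\tau$'s trajectory (a single contiguous block of configurations with $\tau$ at $u$, followed by a single contiguous block with $\tau$ at $v$), which is precisely what ``moves exactly once'' gives us and which, combined with the independence (forbidden-neighbour) constraint, is what forbids any other token from ever touching $u$ or $v$ and hence pins down the weights of all arcs incident to $u$ or $v$.
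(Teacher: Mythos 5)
Your proof is correct and rests on the same two ingredients as the paper's: Corollary~\ref{cor:length} together with the observation that a token sliding exactly once from $u$ to $v$ occupies $u$ continuously before the slide and $v$ continuously after, which (by independence) forbids every other token from ever visiting $u$ or $v$ and hence from ever sliding along any arc in $\Gamma(u)\cup\Gamma(v)$. The only difference is cosmetic: you prove the inclusion of once-used arcs into $B$ directly, whereas the paper argues the contrapositive with a small case split on why $e\notin B$; your direct route covers the same cases and is, if anything, slightly cleaner.
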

\begin{proof}
    Observe that, for every $e = (u, v) \in B$, we have $u \in \source{I} \setminus \sink{I}$ and $v \in \sink{I} \setminus \source{I}$.
    This follows from \cref{cor:length} and the fact that $w(e') = 0$ for every $e' \in (\Gamma(u) \cup \Gamma(v)) \setminus \{e\}$.
    Moreover, in any reconfiguration sequence, the token on $u$ slides to $v$ along $e$ and then stays there.
    Thus, the token moves exactly once along $e$.
    
    Conversely, suppose that $e = (u, v)$ is not in $B$.
    Then, either $w(e) = 0$ or there is an arc $e' \in (\Gamma(u) \cup \Gamma(v)) \setminus \{e\}$ with $w(e') > 0$.
    Let $B'$ be the set of arcs used by the tokens that move exactly once in at least one reconfiguration sequence.
    A goal here is to prove $e \notin B'$.
    If $w(e) = 0$, by \cref{cor:length}, every token does not pass through $e$, implying that $e \notin B'$.
    Moreover, if $u \notin \source{I}$, there are no tokens that slide exactly once along $e$, implying also that $e \notin B'$.
    Thus, we assume that there is an arc $e' \in (\Gamma(u) \cup \Gamma(v)) \setminus \{e\}$ with $w(e') > 0$ and $u \in \source{I}$.
    Suppose for contradiction that $e \in B'$.
    Then, in some reconfiguration sequence the token on $u$ slides to $v$ along $e$ and stay there.
    However, as $w(e') > 0$, there is a token passing through $e'$.
    Then these tokens must be adjacent, which is a contradiction.
\end{proof}

Take any $(u, v) \in B$.
By \cref{lem:blocking_arcs},
for every reconfiguration sequence
the token on $u$ slides to an out-neighbor $v$ of $u$ and stays there, which also implies $u \in \source{I}$ and $v \in \sink{I}$.
Since the tokens must form an independent set, the other tokens can be placed on neither $u$ nor $v$.
Given this, we refer to an arc $e \in B$ as a \emph{blocking arc} in $(T, \source{I}, \sink{I})$.

We can compute the rigid tokens and blocking arcs from $(T, \source{I}, \sink{I})$ in linear time.
Let $R$ be the set of vertices containing rigid tokens and $B$ the set of blocking arcs.
Let $T'$ be the polyforest obtained from $T$ by removing each arc $f$ such that $f$ is incident to a vertex in $R$, or
$f \notin B$ and
$f$ is incident to
an endpoint of $e \in B$.
Then, every component of $T'$ is either an isolated vertex in $R$, a component containing the two vertices connected by an arc in $B$, or a component without any rigid tokens or blocking arcs.
The following lemma reduces our problem to a slightly simplified one.

\begin{lemma}\label{lem:reduction}
    Suppose that for every $v \in R$, there is no $u \in N(v)$ such that $T$ has an arc $e \in \Gamma(u)$ with $w(e) > 0$.
    Then, $(T, \source{I}, \sink{I})$ is a yes-instance if and only if $(T', \source{I}, \sink{I})$ is a yes-instance.
\end{lemma}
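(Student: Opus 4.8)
The plan is to prove the two implications separately; the forward one is essentially immediate, while the backward one requires reordering the given reconfiguration sequence. Throughout I will use the observation — immediate from the definitions — that every arc $f$ removed from $T$ to form $T'$ has $w(f)=0$: if $f$ is incident to a vertex of $R$, this is the defining property of $R$; and if $f\notin B$ is incident to an endpoint of some $e=(u,v)\in B$, then $f\in(\Gamma(u)\cup\Gamma(v))\setminus\{e\}$, so $w(f)=0$ by the definition of $B$. For the forward direction, given a reconfiguration sequence in $T$, \cref{cor:length} says each arc is used by exactly $w(\cdot)$ tokens, so no removed arc is ever used and every move is along an arc of $T'$; since each intermediate set is independent in $T$ it is independent in the subgraph $T'$, so the same sequence is a reconfiguration sequence in $T'$.

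For the backward direction, suppose $(T',\source{I},\sink{I})$ is a yes-instance and fix a reconfiguration sequence $\sigma$ in $T'$. First I would establish two structural facts. (i) Every component $D$ of $T'$ satisfies $|\source{I}\cap V(D)|=|\sink{I}\cap V(D)|$: contracting each component of $T'$ to a point turns the removed arcs into a tree (as $\und{T}$ is a tree), and repeatedly peeling a leaf component gives the claim, since a pendant removed arc $f$ of a leaf component $D$ has one side equal to $V(D)$ and $w(f)=0$. Consequently, for any arc $e$ of $T'$ the value $w(e)$ equals the analogous quantity computed inside the component of $T'$ containing $e$. (ii) If $a$ is $T$-adjacent to some $r\in R$, then $a$ is never occupied in any reconfiguration sequence of $T'$: by hypothesis every arc incident to $a$ has weight $0$, so by (i) and \cref{cor:length} no token traverses such an arc, and $a\notin\source{I}$ since $a$ is adjacent to $r\in\source{I}$; hence no token can reach $a$.

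Next I would use that distinct components of $T'$ evolve independently, so any interleaving of the per-component substreams of $\sigma$ is again a valid reconfiguration sequence in $T'$; the goal is to pick an interleaving that is valid in $T$ as well. I process the components one at a time as atomic blocks: an isolated vertex of $R$ needs no move; a blocking-arc component $\set{u,v}$ (with $u\in\source{I}\setminus\sink{I}$ and $v\in\sink{I}\setminus\source{I}$ by \cref{lem:blocking_arcs}) contributes the single slide $u\to v$; a component with no rigid token and no blocking arc contributes its substream of moves in their original order. An intermediate configuration can fail to be independent in $T$ only through a removed arc joining two currently occupied vertices; removed arcs incident to $R$ cause no trouble by (ii) (their non-rigid endpoint is never occupied), so it remains to handle removed arcs incident to an endpoint of some blocking arc $e_0=(u_0,v_0)$. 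For these I impose the constraints, writing $\prec$ for ``processed earlier'': $\mathrm{slide}\,e_0\prec\mathrm{slide}\,e_0'$ whenever $u_0$ is $T$-adjacent to $v_0'$; $\mathrm{slide}\,e_0\prec C$ whenever the component $C$ meets $N_T(u_0)$; and $C\prec\mathrm{slide}\,e_0$ whenever $C$ meets $N_T(v_0)$. A case analysis — using that $\source{I},\sink{I}$ are independent (so no two tails and no two heads of blocking arcs are adjacent, and $N_T(u_0)\cap\source{I}=N_T(v_0)\cap\sink{I}=\emptyset$), that the hypothesis forbids $u_0$ or $v_0$ being adjacent to a vertex of $R$, and that $\und{T}$ is a tree (so $N_T(u_0)$ and $N_T(v_0)$ are disjoint and meet different components of $T'$) — shows that any processing order respecting $\prec$ keeps every configuration independent in $T$.

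Finally I would check that $\prec$ is acyclic. Assign to each task a region of $\und{T}$: to $\mathrm{slide}\,e_0$ the edge $e_0$ (with endpoints $u_0,v_0$) and to a free component $C$ its vertex set $V(C)$; these are pairwise vertex-disjoint connected subgraphs of $\und{T}$, and each constraint $X\prec Y$ is witnessed by a $\und{T}$-edge between $\mathrm{region}(X)$ and $\mathrm{region}(Y)$. Hence a $\prec$-cycle would yield a cycle in the graph obtained from $\und{T}$ by contracting each region to a point, which is impossible since contracting pairwise-disjoint connected subgraphs of a tree yields a forest. Therefore $\prec$ admits a linear extension, and processing the components in that order produces a reconfiguration sequence from $\source{I}$ to $\sink{I}$ in $T$, completing the proof. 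I expect the backward direction to be the main obstacle — in particular, isolating the correct ordering constraints around the blocking-arc slides and verifying their acyclicity, for which the tree structure of $\und{T}$ is precisely what is needed.
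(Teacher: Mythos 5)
Your proof is correct and follows essentially the same approach as the paper: process the components of $T'$ one at a time as atomic blocks, define a precedence relation driven by $T$-adjacencies between the endpoints of blocking arcs and the other components, verify that any interleaving respecting the precedence stays independent in $T$, and derive acyclicity of the precedence from the tree structure of $\und{T}$. Your contraction argument for acyclicity is somewhat more explicit than the paper's one-line assertion that the precedence relation is acyclic because $T$ is a polytree, but the underlying decomposition, the constraints imposed, and the case analysis are the same.
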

\begin{proof}
    We first show the forward implication.
    For every $e \in A(T) \setminus A(T')$, we have $w(e) = 0$.
    This implies that no tokens pass through $e$, that is, every token moves inside a component of $T'$.
    Moreover, every independent set in $T$ is also an independent set of $T'$.
    Thus, every reconfiguration sequence for $(T, \source{I}, \sink{I})$ is also a reconfiguration sequence for $(T', \source{I}, \sink{I})$.
    
    Conversely, suppose that there is a reconfiguration sequence from $\source{I}$ to $\sink{I}$ in $T'$.
    Let $T_1, T_2, \ldots, T_c$ be the (weakly) connected components in $T'$.
    We may assume that this reconfiguration sequence is constructed in a component-wise manner: The tokens on $\source{I} \cap V(T_1)$ move to $\sink{I} \cap V(T_1)$, then the tokens on $\source{I} \cap V(T_2)$ move to $\sink{I} \cap V(T_2)$, and so on.
    For each $1 \le i \le c$, we let $S_i \coloneqq \sequence{I^i_0, I^i_1, \dots, I^i_{\ell_i}}$ be a reconfiguration sequence of $(T_i, \source{I} \cap V(T_i), \sink{I} \cap V(T_i))$.
    Now, we construct a reconfiguration sequence for $(T, \source{I}, \sink{I})$.
    For distinct components $T_i$ and $T_j$, we say that $T_i$ \emph{precedes} $T_j$ if
    \begin{enumerate}
        \item[(a)] $T_i$ is composed of a unique blocking arc $(x, y)$ and $T_j$ contains a vertex $z \in N_T(x)$, or
        \item[(b)] $T_j$ is composed of a unique blocking arc $(y, x)$ and $T_i$ contains a vertex $z \in N_T(x)$.
    \end{enumerate}
    Since $T$ is a polytree, this precedence relation is acyclic.
    By appropriately reindexing the components (and also sequences $S_i$), we can order the components $T_1, T_2, \ldots, T_c$ in such a way that there are no two components $T_i$ and $T_j$ such that $T_j$ precedes $T_i$ with $i < j$.
    Define a sequence of vertex sets $\sequence{I_0, I_1, \ldots, I_\ell}$ such that
    \begin{itemize}
        \item $I_0 = \source{I}$ and $I_\ell = \sink{I}$, and
        \item for $1 \le i \le c$ and for $1 \le j \le \ell_i$,
        $I_{p + j} = I_{p + j - 1} \setminus \{u\} \cup \{v\}$ with $u \in I^i_{j-1} \setminus I^i_{j}$ and $v \in I^i_{j} \setminus I^i_{j - 1}$, where $p = \sum_{1 \le k < i} \ell_k$. 
    \end{itemize}
    
    For each pair of two sets $(I_{p-1}, I_{p})$,
    we have $I_{p-1} \setminus I_{p} = \{u\}$ and $I_{p} \setminus I_{p - 1} = \{v\}$ for some $(u, v) \in A(T)$.
    Then, it suffices to show that each $I_p$ is an independent set in $T$ for $0 \le p \le \ell$.
    We prove this claim by induction on $p$.
    The base case $p = 0$ is trivial as $I_0 = \source{I}$.
    Let $p > 0$. 
    We assume that $I_{p - 1} \setminus I_{p} = \{u\}$ and $I_{p} \setminus I_{p - 1} = \{v\}$ for some $(u, v) \in A(T)$.
    By the construction of $I_p$, we have $\{u, v\} \subseteq V(T_j)$ for some $1 \le j \le c$.
    Since $I^j_k$ is an independent set in $T_j$ for all $k$, $I_p \cap V(T_j)$ is an independent set.
    Moreover, as $I_p \setminus \{v\} = I_{p-1} \setminus \{u\}$, $I_p \setminus \{v\}$ is an independent set.
    Thus, it suffices to show that $v$ has no neighbor in $I_{p - 1} \setminus \{u\}$.
    Suppose for contradiction that $v$ has a neighbor $x \in I_{p - 1} \setminus \{u\}$.
    As $I_p \cap V(T_j)$ is an independent set, $x$ belongs to some $V(T_{j'})$ for some $j' \neq j$.
    Since $w((u, v)) > 0$, by the assumption that every vertex in $R$ has no neighbor incident to an arc $e'$ with $w(e') > 0$, $x$ does not belong to $R$.
    We also have $v$ does not belong to $R$ as $w((u, v)) > 0$.
    Since the components are contained by removing all arcs incident to some vertex in $R$ and arcs incident to one of the end vertices of some arc in $B$, $(u, v)$ is a blocking arc or $x$ is an end vertex of an blocking arc.
    
    If $(u, v)$ is a blocking arc, by the precedence relation (b), we have $j' < j$.
    This inequality implies that $x \in \sink{I}$ from the construction of $I_p$. 
    By the definition of blocking arcs, $v \in \sink{I}$, which contradicts the fact that $\sink{I}$ is an independent set.
    Suppose otherwise that $x$ is an end vertex of an blocking arc.
    If $(x, y) \in B$, by (a), we have $j' < j$.
    By the definition of blocking arcs, we have $x \in \source{I}$.
    However, as $I_p$ has no vertices of $\source{I} \cap V(T_j)$, we have a contradiction that $x \in I_p$.
    Finally, if $(y, x) \in B$, by (b), we have $j < j'$.
    This directly implies that $x \notin I_p$, a contradiction.
    Hence, the lemma follows.
\end{proof}

It is easy to see that if $T'$ has more than one connected components, then we can solve the problem independently on each connected component.
Moreover, the problem is trivial on a component of size at most two.
Hence, we can assume that the input polytree has no vertices on which rigid tokens are placed or blocking arcs.

\subsection{Necessary and sufficient conditions for yes-instances}
In this subsection, we show the necessary and sufficient conditions for yes-instances.
By \cref{lem:reduction}, we may assume that the instance does not contain rigid tokens and blocking arcs.
For a yes-instance, pick some reconfiguration sequence and let $P_i$ be a directed path along which the $i$-th token moves in the reconfiguration sequence. $\mathcal{P} = \set{P_1, \dots, P_k}$ is obviously a directed path matching from $\source{I}$ to $\sink{I}$.
Since the instance does not contain rigid tokens and blocking arcs, we have $|P_i| \ge 2$ for every $i \in [k]$.
In addition, the successors of source vertices in $\mathcal P$ must be distinct.
To see this, observe that if the source vertices of two paths $P_i$ and $P_j$ in $\mathcal P$ have a common successor $x \coloneqq s'(P_i) = s'(P_j)$, then the tokens on $s(P_i)$ and $s(P_j)$ are both ``gazing'' the vertex $x$, and thus we cannot slide either of the tokens on $s(P_i)$ and $s(P_j)$ at all.
Therefore, $\mathcal P$ must satisfy that $s'(P_i) \neq s'(P_j)$ for all $i, j \in [k]$ with $i \neq j$.
Symmetrically, the predecessors of sink vertices must be distinct, that is, $t'(P_i) \neq t'(P_j)$ for all $i, j \in [k]$ with $i \neq j$.
The goal of this subsection is to show that these necessary conditions are also sufficient.

\begin{lemma}\label{lem:path_set}
    Let $(T, \source{I}, \sink{I})$ be an instance of \textsc{Directed Token Sliding} without rigid tokens and blocking arcs.
    Then $(T, \source{I}, \sink{I})$ is a yes-instance if and only if there exists a set $\mathcal{P} = \{P_1, \dots, P_k\}$ of directed paths satisfying all the following conditions:
    \begin{description}
        \item[{\rm (P1)}] $|P_i| \ge 2$ for all $i \in [k]$,
        \item[{\rm (P2)}] $\mathcal{P}$ is a directed path matching from $\source{I}$ to $\sink{I}$,
        \item[{\rm (P3)}] $s'(P_i) \neq s'(P_j)$ for all $i, j \in [k]$ with $i \neq j$, and
        \item[{\rm (P4)}] $t'(P_i) \neq t'(P_j)$ for all $i, j \in [k]$ with $i \neq j$. 
    \end{description}
\end{lemma}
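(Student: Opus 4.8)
The forward implication of the lemma is the content of the discussion immediately preceding it: given a reconfiguration sequence from $\source{I}$ to $\sink{I}$, the directed path $P_i$ traced by the $i$-th token is well defined (a token never revisits a vertex, since $T$ is acyclic), the set $\mathcal{P}=\{P_1,\dots,P_k\}$ is a directed path matching from $\source{I}$ to $\sink{I}$, the absence of rigid tokens and blocking arcs gives (P1), and the ``two tokens gazing at a common vertex'' observation gives (P3) and, symmetrically, (P4). So I would concentrate on the converse: assuming a set $\mathcal{P}$ with (P1)--(P4), I want to build a reconfiguration sequence.

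The plan for the converse is to realize the reconfiguration by moving the tokens one at a time, each token sliding all the way from $s(P_i)$ to $t(P_i)$ before the next one begins; only the processing order must be chosen. Given $\mathcal{P}$, say that token $j$ \emph{must precede} token $i$ (for $i\ne j$) if $s(P_j)\in N_T[V(P_i)]$ (so that the token on $s(P_j)$ clears the neighborhood of $P_i$ before token $i$ enters it) or $t(P_i)\in N_T[V(P_j)]$ (so that token $i$ has not yet arrived at $t(P_i)$ when token $j$ passes nearby). Let $D(\mathcal{P})$ be the resulting digraph on the token set. If $D(\mathcal{P})$ is acyclic, then processing the tokens in any topological order of $D(\mathcal{P})$ works: when token $i$ slides along $P_i$, every earlier token sits at its sink and every later token at its source, and by acyclicity none of these positions lies in $N_T[V(P_i)]$, so every intermediate set is independent; concatenating the slides yields a reconfiguration sequence and the instance is a yes-instance. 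This verification is short, using only the definition of $D(\mathcal{P})$ and the independence of $\source{I}$ and $\sink{I}$.

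The real difficulty is that a given $\mathcal{P}$ satisfying (P1)--(P4) need not have $D(\mathcal{P})$ acyclic: for instance, if the route $V(P_i)$ is nested strictly inside $V(P_j)$ with both $s(P_i)$ and $t(P_i)$ interior to $V(P_j)$, then token $j$ must both precede and follow token $i$. Such a $\mathcal{P}$ is genuinely not realizable as-is even when the instance is a yes-instance, so re-routing is unavoidable. The plan here is to re-route: among all sets satisfying (P1)--(P4), choose one minimizing $\sum_i |P_i|^2$ (the total length $\sum_i |P_i|$ is an invariant, equal to $\sum_{e\in A}w(e)$ by \cref{lem:we}, so this favors ``balanced'', untangled systems), and show that a cycle in $D(\mathcal{P})$ exposes a pair of tokens whose routes can be re-matched --- swapping a suitable source/sink assignment along their overlap in the tree --- so as to preserve (P1)--(P4) while strictly decreasing $\sum_i|P_i|^2$, contradicting minimality. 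Hence the minimizer has acyclic $D(\mathcal{P})$, and the previous paragraph finishes the proof.

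I expect the re-routing step to be the main obstacle: showing that whenever $D(\mathcal{P})$ contains a (shortest) cycle, the tree structure together with (P2)--(P4) forces two routes that overlap in a way a local swap ``uncrosses'', and that the swapped system still satisfies (P1)--(P4) --- preserving (P3) and (P4) under the swap being the delicate part. An alternative organization that avoids the global minimization is an induction on $\sum_i|P_i|$: slide a single token one step along its route and, if this breaks (P3), restore it by the same local uncrossing move, noting that (P1), (P2), and (P4) clearly survive a single one-step slide so that only (P3) needs care. Either way, the uncrossing lemma is the crux.
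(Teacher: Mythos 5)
Your outline has the same structure as the paper's proof: the forward implication is the discussion preceding the lemma; your precedence digraph $D(\mathcal{P})$ is the paper's $G_{\dashleftarrow}$ (up to the convention for arc orientation); and your ``re-matching a source/sink assignment along the overlap'' is exactly the paper's local uncrossing at what it calls a \emph{biased path pair}, a swap which strictly decreases $\sum_i |P_i|^2$ while preserving (P1)--(P4). The verification that a topological order of an acyclic $D(\mathcal{P})$ realizes the reconfiguration path-by-path is also correct, and your induction (process a token with no predecessor, check the resulting set stays independent, delete it from the precedence digraph) is what the paper does.

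However, you have left unproved the one step that carries essentially all the weight, and it does not fall out of a short local picture. You assert that a directed cycle in $D(\mathcal{P})$ exposes an uncrossable pair; this claim is precisely the contrapositive of the paper's \Cref{lem:dag} (``a path set with no biased pairs has acyclic $G_{\dashleftarrow}$''), and proving it is the heart of the section. The paper takes a \emph{shortest} cycle $(1,2,\dots,\ell,1)$ in $G_{\dashleftarrow}$ and splits into two cases. If every consecutive pair $P_i^*, P_{i+1}^*$ shares an internal vertex $z_i$, then the touching conditions (A)/(B) together with the absence of biased pairs give, for each $i$, $\mathrm{dist}(s(P_i^*), z_i) \le \mathrm{dist}(s(P_{i+1}^*), z_i)$ and $\mathrm{dist}(z_i, t(P_i^*)) \ge \mathrm{dist}(z_i, t(P_{i+1}^*))$ with at least one strict; concatenating these into a closed walk $s(P_1^*) \to z_1 \to s(P_2^*) \to \cdots \to z_\ell \to s(P_1^*)$ in $\und{T}$ forces some arc of $T$ to be traversed strictly more often in one direction than the other --- impossible in a polytree. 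If some consecutive pair shares no internal vertex, a further (and somewhat delicate) case analysis on which side of the arc $(s(P_i^*), s'(P_i^*))$ the neighboring paths $P_{i-1}^*, P_{i+1}^*$ lie, using minimality of the cycle, again yields a contradiction. You correctly flag this as ``the crux'', but what you have written is a roadmap that matches the paper's decomposition, not a proof of the lemma.
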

We refer to the four conditions (P1)--(P4) as the \emph{path-set conditions}.

    We have already seen the only-if direction as above.
    In the following, we show the other direction.
    We call a vertex in a path other than the source or the sink an \emph{internal vertex}.
    For vertices $u, v \in V$, we define $\mathrm{dist}(u, v)$ as the length of the unique directed $u$-$v$ path in $T$ if such a path exists.
    We refer to a pair of directed paths $P$ and $P'$ satisfying the following conditions as a \emph{biased path pair}:
    \begin{itemize}
        \item $P$ and $P'$ have a common internal vertex $x$,
        \item $\mathrm{dist}(s(P), x) > \mathrm{dist}(s(P'), x)$ and $\mathrm{dist}(x, t(P)) > \mathrm{dist}(x, t(P'))$.
    \end{itemize}
    
    \begin{lemma}\label{clm:bias}
        If there exists a set of paths satisfying the path-set conditions, then there also exists a set of paths that satisfies the path-set conditions and does not include any biased path pair.
    \end{lemma}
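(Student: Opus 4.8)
The plan is to single out a set $\mathcal{P}$ satisfying the path-set conditions that is \emph{extremal} for a well-chosen potential and then argue that extremality rules out biased path pairs. Concretely, among the sets of directed paths satisfying (P1)--(P4) — a finite, and by hypothesis nonempty, collection, so a minimizer exists — I would pick one, call it $\mathcal{P}$, minimizing $\Phi(\mathcal{P}) \coloneqq \sum_{P \in \mathcal{P}} |P|^2$. (One checks from \cref{lem:we} that $\sum_{P\in\mathcal P}|P|$ is in fact the same for every directed path matching from $\source I$ to $\sink I$, so the genuine degree of freedom is the sum of squares of the lengths.) The claim to establish is that such a minimizer contains no biased path pair.

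Suppose toward a contradiction that $P, P' \in \mathcal{P}$ form a biased path pair with common internal vertex $x$. Put $a_1 \coloneqq \mathrm{dist}(s(P), x)$, $a_2 \coloneqq \mathrm{dist}(x, t(P))$, $b_1 \coloneqq \mathrm{dist}(s(P'), x)$, $b_2 \coloneqq \mathrm{dist}(x, t(P'))$; since $x$ is internal to both paths these are all $\ge 1$, and the biased condition says $a_1 > b_1$ and $a_2 > b_2$. The move I would make is a \emph{tail swap}: let $Q$ be the concatenation of the directed $s(P)$--$x$ prefix of $P$ with the directed $x$--$t(P')$ suffix of $P'$, let $Q'$ be the concatenation of the directed $s(P')$--$x$ prefix of $P'$ with the directed $x$--$t(P)$ suffix of $P$, and set $\mathcal{P}' \coloneqq (\mathcal{P} \setminus \{P, P'\}) \cup \{Q, Q'\}$. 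The first thing to verify — and the part I expect to need real care — is that $Q$ and $Q'$ are simple directed paths, not merely directed walks. For this I would show that $x$ separates $s(P)$ from $t(P')$ in $\und T$: the last arc of the $s(P)$--$x$ prefix of $P$ enters $x$, while the first arc of the $x$--$t(P')$ suffix of $P'$ leaves $x$, so — because $\und T$ is a tree and hence a polytree has no antiparallel arcs — these two arcs are incident to two \emph{distinct} neighbors of $x$; therefore the prefix and the suffix lie in different components of $\und T - x$ and meet only at $x$, so their concatenation $Q$ is a simple directed path from $s(P)$ to $t(P')$ of length $a_1 + b_2 \ge 2$, and the same reasoning gives $Q'$ of length $b_1 + a_2 \ge 2$.

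It then remains to check that $\mathcal{P}'$ again satisfies (P1)--(P4) and has strictly smaller potential. (P1) is the length bound just recorded. The swap changes neither the multiset of sources ($s(Q) = s(P)$, $s(Q') = s(P')$) nor the multiset of sinks ($t(Q) = t(P')$, $t(Q') = t(P)$), so $\mathcal{P}'$ is again a directed path matching from $\source I$ to $\sink I$, giving (P2) (and in particular the sources are still pairwise distinct, so $\mathcal{P}'$ is a genuine $k$-element set). Since $Q$ inherits its first arc from $P$ and its last arc from $P'$, we have $s'(Q) = s'(P)$ and $t'(Q) = t'(P')$, and symmetrically $s'(Q') = s'(P')$ and $t'(Q') = t'(P)$; hence the multiset of second vertices and the multiset of second-to-last vertices are both unchanged, which gives (P3) and (P4). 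Finally, a one-line expansion gives $(|P|^2 + |P'|^2) - (|Q|^2 + |Q'|^2) = 2(a_1 - b_1)(a_2 - b_2) > 0$, so $\Phi(\mathcal{P}') < \Phi(\mathcal{P})$, contradicting minimality. Hence no minimizer contains a biased path pair, which proves the lemma.

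In short, the only nonroutine point is the verification that the tail-swapped paths $Q, Q'$ are simple directed paths; this is precisely where the polytree structure is used, since the absence of antiparallel arcs forces $x$ to disconnect the two sides in $\und T$. Everything afterwards is bookkeeping to preserve (P1)--(P4), and the biased inequalities $a_1 > b_1$ and $a_2 > b_2$ are exactly what makes the potential $\sum |P|^2$ strictly decrease.
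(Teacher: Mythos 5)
Your proof is correct and takes essentially the same approach as the paper's: swap the two tails at the common internal vertex $x$, check that the resulting set still satisfies (P1)--(P4), and observe that the sum of squared lengths drops by $2(a_1-b_1)(a_2-b_2)>0$, where you phrase the finite-descent step as minimizing the potential while the paper phrases it as iterating the swap finitely many times. Your explicit verification that the concatenations $Q,Q'$ are simple directed paths (because the entering and leaving arcs at $x$ go to distinct neighbors, so the two pieces live in different components of $\und{T}-x$) is a welcome detail that the paper leaves implicit.
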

    \begin{proof}
        Let $\mathcal{P} = \{P_1, \dots, P_k\}$ be a set of paths that satisfies the path-set conditions.
        For each $i \in [k]$, we let $s_i$ and $t_i$ be $s(P_i)$ and $t(P_i)$, respectively.
        Suppose that $\mathcal P$ contains a biased path pair $P_i$ and $P_j$,
        where $x$ is a common internal vertex of $P_i$ and $P_j$ that satisfies $\mathrm{dist}(s_i, x) > \mathrm{dist}(s_j, x)$ and $\mathrm{dist}(x, t_i) > \mathrm{dist}(x, t_j)$.
        We denote by $P_i[s_i, x]$ the subpath of $P_i$ from $s_i$ to $x$
        and by $P_i[x, t_i]$ that from $x$ to $t_i$.
        Similarly we define $P_j[s_j, x]$ and $P_j[x, t_j]$.
        Let $P_i'$ (resp.\ $P_j'$) denote the concatenation of $P_i[s_i, x]$ and $P_j[x, t_j]$ (resp.\ $P_j[s_j, x]$ and $P_i[x, t_i]$).
        Then we claim that $\mathcal{P}' \coloneqq \{P_1, \dots, P_k\} \setminus \{P_i, P_j\} \cup \{ P_i', P_j' \}$ also satisfies the path-set conditions.
        It is easy to verify that $\mathcal P'$ is a directed path matching from $\source{I}$ to $\sink{I}$; $\mathcal P'$ satisfies (P2).
        Since $x$ is an internal vertex,
        we also have $s'(P'_i) = s'(P_i)$, $t'(P'_i) = t'(P_j)$, $s'(P_j') = s'(P_j)$, and $t'(P'_j) = t'(P_i)$,
        and $|P_i[s_i, x]|, |P_i[x, t_i]|, |P_j[s_j, x]|, |P_j[x, t_j]|$ are all positive.
        Thus $\mathcal{P}'$ satisfies the conditions (P1), (P3), and (P4).
        Then
        \begin{align*}
            \sum_{P \in \mathcal{P}} |P|^2 - \sum_{P' \in \mathcal P'} |P'|^2 = |P_i|^2 + |P_j|^2 - |P'_i|^2 - |P'_j|^2 > 0,
        \end{align*}
        where the inequality is obtained from 
        the second condition of biased path pairs.
        The sum of squares of the length of paths in $\mathcal{P}'$ is strictly smaller than that of the original path set.
        Therefore, by applying this operation finitely many times, we can obtain a path set that satisfies the path-set conditions and does not contain any biased path pair. 
    \end{proof}
    
    In the following, let $\mathcal{P}^* = \{P_1^*, \dots, P_k^*\}$ be a set of paths that satisfies the path-set conditions and has no biased path pairs.
    We show that there exists a bijection $\pi: [k] \to [k]$ having the following property ($\ast$):
    \begin{description}
        \item[{\rm ($\ast$)}]
        $\source{I}$ is reconfigurable into $\sink{I}$ in a path-by-path manner following $\pi$;
        that is, by first moving the $\pi(1)$-th token from $s(P_{\pi(1)}^*)$ all the way to $t(P_{\pi(1)}^*)$ along $P_{\pi(1)}^*$, 
        then moving the $\pi(2)$-th token from $s(P_{\pi(2)}^*)$ all the way to $t(P_{\pi(2)}^*)$ along $P_{\pi(2)}^*$, and so on.
    \end{description}
    From now on, we consider how to construct $\pi$ satisfying $(\ast)$.
    For a vertex $v$ and a directed path $P$, we say that $v$ \emph{touches} $P$ or $P$ \emph{touches} $v$ if $N[v] \cap V(P) \neq \emptyset$, where $V(P)$ denotes the set of vertices in $P$.
    For $i \neq j$, let (A) and (B) be the following conditions:
    \begin{description}
        \item[{\rm (A)}] $s(P_i^*)$ touches $P_j^*$; 
        \item[{\rm (B)}] $t(P_j^*)$ touches $P_i^*$.
    \end{description}

    A binary relation $\dashleftarrow$ is defined as: $i \dashleftarrow j$ if and only if at least one of (A) and (B) holds for different $i$ and $j$.
    Let $G_{\dashleftarrow}$ be the directed graph such that
    the vertex set of $G_{\dashleftarrow}$ is $[k]$ and, for $i, j \in [k]$, $G_{\dashleftarrow}$ has the arc $(i, j)$ if and only if $j \dashleftarrow i$ holds.
    
    If $G_{\dashleftarrow}$ is a directed acyclic graph,
    then we can construct $\pi$ satisfying ($\ast$) as follows.
    Since $G_{\dashleftarrow}$ is a DAG, there is a vertex $i$ such that its out-degree is 0.
    For any $j \neq i$,
    every vertex in $P_i^*$ does not belong to $N[s(P_j^*)]$
    since $P_i^*$ does not touch $s(P_j^*)$,
    and every vertex in $P_j^*$ does not belong to $N[t(P_i^*)]$ since $P_j^*$ does not touch $t(P_i^*)$.
    The former implies that
    the token on $s(P_i^*)$ sliding to $t(P_i^*)$ along $P_i^*$ does not make adjacent token pairs and hence forms a reconfiguration sequence from $\source{I}$ to $\source{I} \setminus \{ s(P_i^*)\} \cup \{t(P_i^*)\}$.
    The latter implies that
    the graph $G_{\dashleftarrow}$ for the resulting independent set $\source{I} \setminus \{ s(P_i^*)\} \cup \{t(P_i^*)\}$
    is obtained by deleting the vertex $i$,
    which is still a DAG.
    By repeating the above procedure,
    $\source{I}$ is reconfigurable into $\sink{I}$ in a path-by-path manner.
    Thus a bijection $\pi : [k] \to [k]$ satisfying $\pi(1) < \pi(2) < \cdots < \pi(k)$
    with respect to
    a topological order of $G_{\dashleftarrow}$
    admits ($\ast$), as required.

    The following lemma says that $G_{\dashleftarrow}$ is actually a directed acyclic graph,
    which verifies the if-condition of \cref{lem:path_set}.
    \begin{lemma}\label{lem:dag}
    $G_{\dashleftarrow}$ is a directed acyclic graph.
    \end{lemma}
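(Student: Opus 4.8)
The plan is to show that any directed cycle in $G_{\dashleftarrow}$ leads to a contradiction, using the polytree structure together with the assumption that $\mathcal{P}^*$ contains no biased path pair. Suppose for contradiction that $G_{\dashleftarrow}$ has a directed cycle; then there is a sequence of indices $i_1, i_2, \dots, i_m, i_{m+1} = i_1$ (all distinct except for the wrap-around) such that $i_{r+1} \dashleftarrow i_r$ for every $r \in [m]$, i.e.\ for each consecutive pair at least one of conditions (A), (B) holds. First I would unwind what ``$i \dashleftarrow j$'' gives us geometrically: if (A) holds then $s(P_i^*)$ touches $P_j^*$, meaning the (unique) undirected path in $T$ from $s(P_i^*)$ to some vertex of $P_j^*$ has length $\le 1$; since $|P_i^*| \ge 2$ by (P1), and $s'(P_i^*)$ is by (P3) distinct from every other $s'(P_\ell^*)$, a short case analysis pins down where $s(P_i^*)$ sits relative to $P_j^*$ — essentially $s(P_i^*)$ is an internal vertex of $P_j^*$, or a neighbor of one, in a controlled way. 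Symmetrically for (B) with $t(P_j^*)$ and $P_i^*$, using (P4).

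The heart of the argument is to extract, from a cyclic chain of such ``touching'' relations, a biased path pair among some two of the paths $P_{i_r}^*$, contradicting the choice of $\mathcal{P}^*$ via \cref{clm:bias}. The intuition is that a cycle $i_2 \dashleftarrow i_1$, $i_3 \dashleftarrow i_2$, \dots forces the sources and sinks of the involved paths to ``chase each other around'' a region of the polytree; because $\und{T}$ is a tree with no room for genuine cycles, the only way each consecutive touching can be realized is if one path is strictly nested inside another near a shared vertex, i.e.\ for some $r$ the pair $P_{i_r}^*, P_{i_{r+1}}^*$ (or two paths at bounded distance in the chain) has a common internal vertex $x$ with $\mathrm{dist}(s(P^*),x)$ and $\mathrm{dist}(x,t(P^*))$ both strictly larger on one side — a biased path pair. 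Concretely, I would fix the shortest such cycle $m$, consider a vertex $z$ of $T$ that is ``central'' to the union of the $P_{i_r}^*$'s (e.g.\ a vertex lying on a maximal number of them, or a median), and track, around the cycle, how the ``entry direction'' of each path at $z$ relates to the next; the distinctness conditions (P3) and (P4) prevent two paths from sharing the same immediate successor of a source or predecessor of a sink, which is exactly what rules out the non-biased configurations and leaves only the biased one.

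I expect the main obstacle to be the bookkeeping in the case analysis: conditions (A) and (B) are not symmetric, a single arc of $G_{\dashleftarrow}$ may be witnessed by (A) only, (B) only, or both, and the cycle may alternate between the two types of witnesses in an arbitrary pattern, so reducing to ``one path nested in another'' requires carefully locating the shared internal vertex and verifying both strict inequalities of the biased-path-pair definition simultaneously. A clean way to manage this is to first prove a local claim — if $i \dashleftarrow j$ via (A) then $s(P_i^*) \in V(P_j^*) \setminus \{s(P_j^*), t(P_j^*)\}$ or $s(P_i^*)$ is adjacent to such a vertex and $s'(P_i^*)$ then forces $s(P_i^*)$ onto $P_j^*$ — turning each relation into a statement purely about containment of endpoints in other paths, after which the tree structure (any two of the undirected paths $P_{i_r}^*$ intersect in a subpath, and their pairwise intersections fit together like intervals on a line once we root $T$ suitably) makes the cyclic chase collapse. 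If even this refined version cannot avoid a biased pair directly, the fallback is a minimal-counterexample / potential argument: among all directed cycles in $G_{\dashleftarrow}$ (over all path sets $\mathcal{P}^*$ with no biased pair) pick one minimizing $\sum |P_{i_r}^*|$ or $\sum |P_{i_r}^*|^2$, and show a swap at a shared vertex either shortens the cycle or produces a biased pair, contradicting minimality.
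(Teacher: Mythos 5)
Your plan identifies the right starting point (assume a directed cycle in $G_{\dashleftarrow}$, use the polytree structure and the no-biased-pair property), but the central mechanism you propose — extracting a biased path pair from the cycle to contradict the choice of $\mathcal{P}^*$ — is not established, and it is not how the paper's proof actually works. The paper does \emph{not} show that a cycle produces a biased pair. Instead, it uses the no-biased-pair assumption the other way around: combining each relation $i \dashleftarrow i{+}1$ with (P3), (P4), and independence of $\source{I}$ and $\sink{I}$ yields a strict inequality $\mathrm{dist}(s(P_i^*), z_i) < \mathrm{dist}(s(P_{i+1}^*), z_i)$ \emph{or} $\mathrm{dist}(z_i, t(P_i^*)) > \mathrm{dist}(z_i, t(P_{i+1}^*))$; the absence of biased pairs then upgrades this to \emph{both} weak inequalities $\mathrm{dist}(s(P_i^*), z_i) \le \mathrm{dist}(s(P_{i+1}^*), z_i)$ \emph{and} $\mathrm{dist}(z_i, t(P_i^*)) \ge \mathrm{dist}(z_i, t(P_{i+1}^*))$. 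The contradiction is then obtained by a closed-walk parity argument: concatenating the walks $s(P_i^*) \to z_i \to s(P_{i+1}^*)$ (or the sink-side analogue) around the cycle produces a closed oriented walk in which some arc is traversed strictly more often in one direction than the other, which is impossible in a tree. This walk-parity step is the engine of the proof, and nothing in your sketch plays that role; your ``median vertex'' and ``chase around'' ideas are not developed into anything that could substitute for it.

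A second, independent gap: your sketch tacitly assumes that consecutive paths $P_i^*$ and $P_{i+1}^*$ in the cycle share a common (internal) vertex, but this does not follow from $i \dashleftarrow i{+}1$ alone — the touching conditions only put an endpoint within distance $1$ of the other path. The paper spends roughly half the proof on the case where some consecutive pair has \emph{no} common internal vertex, using the minimality of the cycle, the arc $(s'(P_i^*), x^*)$, and the component $C_{x^*}$ to derive a contradiction via an extra case split on whether $s'(P_i^*) \in V(P_{i+1}^*)$. This entire branch is absent from your plan. Finally, your fallback of a minimal-counterexample potential argument over all $\mathcal{P}^*$ is orthogonal to what is actually needed: the minimality that matters in the paper is the minimality of the \emph{cycle} in $G_{\dashleftarrow}$, not a minimization over path sets, and it is invoked in the second case, not as a global backstop.
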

    \begin{proof}
    Suppose, to the contrary, that there is a directed cycle in $G_{\dashleftarrow}$.
    We can assume that $(1,2, \dots, \ell, 1)$ is a minimal one.
    For convenience, the addition $+$ and subtraction $-$ are taken over modulo $\ell$,
    i.e., $\ell + 1$ is regarded as $1$ and $0$ is regarded as $\ell$.
    
    Suppose moreover that $P_i^*$ and $P_{i+1}^*$ have a common internal vertex for all $i \in [k]$.
    Let $x$ and $y$ be the source and the sink of the maximal common subpath of $P^*_i$ and $P^*_{i + 1}$, respectively.
    Since $T$ is a polytree, these $x$ and $y$ are uniquely determined. 
    For $i \in [\ell]$, at least one of (A) and (B) holds.
    If (A) $s(P^*_i)$ touches $P^*_{i+1}$, then we have $\mathrm{dist}(s(P_i^*), x) \le 1$.
    If $\mathrm{dist}(s(P_i^*), x) \ge \mathrm{dist}(s(P_{i+1}^*), x)$, either $s'(P^*_i) = s'(P^*_{i+1})$ or $s(P^*_i)$ and $s(P^*_{i + 1})$ are adjacent, which do not hold as
    (P3)
    or the fact that $\source{I}$ is an independent set.
    Thus, we have $\mathrm{dist}(s(P_i^*), x) < \mathrm{dist}(s(P_{i+1}^*), x)$.
    If (B) $t(P^*_{i + 1})$ touches $P^*_{i}$, then we have $\mathrm{dist}(y, t(P_{i+1}^*)) \le 1$.
    Then, by a symmetric argument, 
    we have $\mathrm{dist}(y, t(P_i^*)) > \mathrm{dist}(y, t(P_{i+1}^*))$.
    Let $z_i$ be an arbitrary common internal vertex of $P_i^*$ and $P_{i+1}^*$.
    The above argument is summarised as
    \begin{align}\label{eq:<}
        \mathrm{dist}(s(P_i^*), z_i) < \mathrm{dist}(s(P_{i+1}^*), z_i)
        \mbox{ or }
        \mathrm{dist}(z_i, t(P_i^*)) > \mathrm{dist}(z_i, t(P_{i+1}^*)).
    \end{align}
    \Cref{eq:<} and the non-existence of biased pairs together imply that for every $i$, 
    we have
    \begin{align}\label{eq:<=}
        \mathrm{dist}(s(P_i^*), z_i) \le \mathrm{dist}(s(P_{i+1}^*), z_i)
        \mbox{ and }
        \mathrm{dist}(z_i, t(P_i^*)) \ge \mathrm{dist}(z_i, t(P_{i+1}^*)).
    \end{align}
    For a walk $W$ in $\und{T}$, let $\overrightarrow{W}$ be an oriented walk obtained from $W$ by orienting each edge from (arbitrary) one of the end vertices to the other.
    Let $\mathrm{fwd}(\overrightarrow{W})$ and $\mathrm{rev}(\overrightarrow{W})$ be the numbers of times that $\overrightarrow{W}$ passes through arcs in the forward and reverse directions in $T$, respectively.
    Here, by the fact that $\mathrm{dist}(s(P_i^*), z_i) \le \mathrm{dist}(s(P_{i+1}^*), z_i)$ (\cref{eq:<=}),
    there exists an oriented walk $\overrightarrow{W}$ from $s(P_i^*)$ to $s(P_{i+1}^*)$ satisfying $\mathrm{fwd}(\overrightarrow{W}) \le \mathrm{rev}(\overrightarrow{W})$.
    This can be obtained by traversing $\und{T}$ from $s(P_i^*)$ to $s(P_{i+1}^*)$ via $z_i$ (which we denote by $s(P^*_i) \to z_i \to s(P^*_{i+1})$).
    In particular, if $\mathrm{dist}(s(P_i^*), z_i) < \mathrm{dist}(s(P_{i+1}^*), z_i)$, then $\mathrm{fwd}(\overrightarrow{W}) < \mathrm{rev}(\overrightarrow{W})$ holds. 
    Suppose that $\mathrm{dist}(s(P_1^*), z_1) < \mathrm{dist}(s(P_{2}^*), z_1)$ holds.
    Then, the closed oriented walk 
    \[
        \overrightarrow{W_s} \coloneqq s(P_1^*) \to z_1 \to s(P_2^*) \to \dots \to s(P_\ell^*) \to z_\ell \to s(P_1^*)
    \]
    satisfies $\mathrm{fwd}(\overrightarrow{W_s}) < \mathrm{rev}(\overrightarrow{W_s})$. See \Cref{fig:img5} for an illustration.
    This implies that there is an arc $e$ in $T$ such that $e$ occurs in $\overrightarrow{W_s}$ at least once and
    the number of occurrences of $e$ is strictly smaller than that of the reverse of $e$ in $\overrightarrow{W_s}$.
    This contradicts the fact that $T$ is a polytree.
    Thus suppose $\mathrm{dist}(s(P_1^*), z_1) \ge \mathrm{dist}(s(P_{2}^*), z_1)$ holds.
    Then $\mathrm{dist}(z_1, t(P_1^*)) > \mathrm{dist}(z_1, t(P_{2}^*))$ follows from~\cref{eq:<}.
    By a similar argument as above, we can construct a closed oriented walk $\overrightarrow{W_t}$ from $t(P_1^*)$ to $t(P_1^*)$ satisfying $\mathrm{fwd}(\overrightarrow{W_t}) > \mathrm{rev}(\overrightarrow{W_t})$.
    This also contradicts the fact that $T$ is a polytree.
    Thus, we derive a contradiction, assuming that $G_{\dashleftarrow}$ has a directed cycle and $P_i^*$ and $P_{i+1}^*$ have a common internal vertex for all $i \in [k]$.
    
    \begin{figure}[t]
        \begin{minipage}{0.6\linewidth}
            \centering
            \newcommand{\basePath}[1]{
    \node[draw=none, fill=none] at (0, 6) (pLabel) {\huge $P^*_{#1}$};
    \node at (0, 5.2) (topNode) {};
    \node at (0, -.5) (bottomNode) {};
}

\def\exscale{0.5}
\begin{tikzpicture}[scale = \exscale,
    every node/.style = {scale=\exscale, circle, line width=0.5pt, draw=black, fill=white, minimum size=12pt, inner sep=0pt, outer sep=0pt},
    walkHead/.style = {arrows = {-stealth[length=3pt]}},
    walk/.style = {walkHead, draw=red, line width = 0.5pt},
    arc/.style = {walkHead, draw=black, line width = 0.5pt},
    ]
    \def\myXshift{4}
    \def\walkShift{10pt, 0}
    \def\walkLenShift{19pt, 0}
    \begin{scope}
        \basePath{1}
        \node[label={[label distance = 3pt] 225:\huge $z_1$}] at (0, 3) (x11) {};
        \draw[arc] (topNode) to (x11);
        \draw[arc] (x11) to (bottomNode);

        \draw [decorate,decoration={brace,amplitude=3pt}]
        ($(topNode.south) + (\walkLenShift)$) -- ($(x11.north) + (\walkLenShift)$);

        \node[draw=none, fill=none] at ($(topNode.south)!0.5!(x11.north) + (\myXshift/2, 0)$) {\huge $\le$};

        \draw[walk] ($(topNode.south) + (\walkShift)$) to ($(x11.north) + (\walkShift)$);
        \draw[walk] ($(x11.east) + (3pt, 0)$) to ($(x11.west) + (\myXshift, 0) - (3pt, 0)$);
    \end{scope}

    \begin{scope}[shift={(\myXshift, 0)}]
        \basePath{2}
        \node[label={[label distance = 3pt] 225:\huge $z_1$}] at (0, 3) (x12) {};
        \node[label={[label distance = 3pt] 225:\huge $z_2$}] at (0, 1.5) (x22) {};
        \draw[arc] (topNode) to (x12);
        \draw[arc] (x12) to (x22);
        \draw[arc] (x22) to (bottomNode);

        \draw [decorate,decoration={brace,amplitude=3pt}]
        ($(x12.north) - (\walkLenShift)$) -- ($(topNode.south) - (\walkLenShift)$);
        \draw [decorate,decoration={brace,amplitude=3pt}]
        ($(topNode.south) + (\walkLenShift)$) -- ($(x22.north) + (\walkLenShift)$);

        \node[draw=none, fill=none] at ($(topNode.south)!0.5!(x22.north) + (\myXshift/2, 0)$) {\huge $\le$};

        \draw[walk] ($(x12.north) - (\walkShift)$) to ($(topNode.south) - (\walkShift)$);
        \draw[walk] ($(topNode.south) + (\walkShift)$) to ($(x22.north) + (\walkShift)$);
        \draw[walk] ($(x22.east) + (3pt, 0)$) to ($(x22.west) + (\myXshift, 0) - (3pt, 0)$);
    \end{scope}

    \begin{scope}[shift={(2*\myXshift, 0)}]
        \basePath{3}
        \node[label={[label distance = 3pt] 225:\huge $z_2$}] at (0, 1.5) (x13) {};
        \node[draw=none, fill=none] at (0, 3.5) (xn3) {};
        \draw[arc] (topNode) to (x13);
        \draw[arc] (x13) to (bottomNode);
        \draw [decorate,decoration={brace,amplitude=3pt}]
        ($(x13.north) - (\walkLenShift)$) -- ($(topNode.south) - (\walkLenShift)$);

        \draw[walk] ($(x13.north) - (\walkShift)$) to ($(topNode.south) - (\walkShift)$);
        \draw[walk] ($(topNode.south) + (\walkShift)$) to ($(xn3.north) + (\walkShift)$);
        \draw[draw=red, dashed, line width=0.5pt] ($(xn3.north) + (\walkShift) - (0, 0.2)$) to ($(xn3.north) + (\walkShift) - (0, 1.5)$);
    \end{scope}

    \begin{scope}[shift={(3*\myXshift, 0)}]
        \basePath{\ell}
        \node[label={[label distance = 3pt] 225:\huge $z_{\ell-1}$}] at (0, 3.5) (xl14) {};
        \node[label={[label distance = 3pt] 225:\huge $z_{\ell}$}] at (0, 2) (xl4) {};
        \draw[arc] (topNode) to (xl14);
        \draw[arc] (xl14) to (xl4);
        \draw[arc] (xl4) to (bottomNode);

        \draw [decorate,decoration={brace,amplitude=3pt}]
        ($(topNode.south) + (\walkLenShift)$) -- ($(xl4.north) + (\walkLenShift)$);

        \node[draw=none, fill=none] at ($(topNode.south)!0.5!(xl4.north) + (\myXshift/2, 0)$) {\huge $\le$};

        \draw[draw=black, dotted, line width=2pt] ($(xl14.north) - (\walkShift) - (2, 2)$) to ($(xl14.north) - (\walkShift) - (1, 2)$);

        \draw[walkHead, draw=red, dashed, line width=0.5pt] ($(xl14.north) - (\walkShift) - (1, 0)$) to ($(xl14.north) - (\walkShift) - (3pt, 0)$);
        \draw[walk] ($(xl14.north) - (\walkShift)$) to ($(topNode.south) - (\walkShift)$);
        \draw[walk] ($(topNode.south) + (\walkShift)$) to ($(xl4.north) + (\walkShift)$);

        \draw[walk] ($(xl4.east) + (3pt, 0)$) to ($(xl4.west) + (\myXshift, 0) - (3pt, 0)$);
    \end{scope}

    \begin{scope}[shift={(4*\myXshift, 0)}]
        \basePath{1}
        \node[label={[label distance = 3pt] 225:\huge $z_{\ell}$}] at (0, 2) (xl5) {};
        \draw[arc] (topNode) to (xl5);
        \draw[arc] (xl5) to (bottomNode);

        \draw [decorate,decoration={brace,amplitude=3pt}]
        ($(xl5.north) - (\walkLenShift)$) -- ($(topNode.south) - (\walkLenShift)$);

        \draw[walk] ($(xl5.north) - (\walkShift)$) to ($(topNode.south) - (\walkShift)$);
    \end{scope}
\end{tikzpicture}
            \caption{Closed oriented walk $s(P_1^*) \to z_1 \to s(P_2^*) \to \dots \to s(P_\ell^*) \to z_\ell \to s(P_1^*)$.}
            \label{fig:img5}
        \end{minipage}
        \hfill
        \begin{minipage}{0.35\linewidth}
            \centering
            \def\exscale{0.5}
\begin{tikzpicture}[scale = \exscale,
                every node/.style = {scale=\exscale, circle, line width=0.5pt, draw=black, fill=black, minimum size=8pt, inner sep=0pt, outer sep=0pt},
        ]
        \def\myXshift{3.4}
        \def\myYshift{1.3}

        \node[label ={[label distance=10pt]1:\huge$s'(P^*_i)$}] at (1, 1) (sppi) {};
        \node[label ={[label distance=10pt]1:\huge$s(P^*_i)$}]  at ([shift=({90 :2 cm})]sppi) (c01) {};
        \node[label = {[label distance=4pt] 90: \huge$x^*$}]  at ([shift=({210:2 cm})]sppi) (c11)  {};
        \node  at ([shift=({250:2 cm})]sppi) (c21)  {};
        \node  at ([shift=({290:2 cm})]sppi) (c31)  {};
        \node  at ([shift=({330:2 cm})]sppi) (c41)  {};

        \foreach \i/\rot in {0/60, 1/180,2/220,3/260,4/300} {
                        \node[draw=none, minimum size=0pt] at ([shift=({ 0+\rot:2 cm})]c\i1) (c\i2)  {};
                        \node[draw=none, minimum size=0pt] at ([shift=({60+\rot:2 cm})]c\i1) (c\i3)  {};
                        \draw (c\i1) -- (c\i2)  -- (c\i3) -- (c\i1) -- cycle;
                        \draw (sppi) -- (c\i1);
                }
        \node[draw=none, fill=none] at ($(c11) + (0.2, 1)$) (c14) {};
        \node[draw=none, fill=none] at ($(c14) + (-2, 0.2)$) (c15) {\huge$C_{x^*}$};
        \draw[thick, dashed] \convexpath{c11,c13,c12,c14}{1em};
\end{tikzpicture}
            \caption{An illustration of a polytree considered in the proof of \cref{lem:dag}.}
            \label{fig:img7}
        \end{minipage}
    \end{figure}
    
    The remaining task is to show that $P_i^*$ and $P_{i+1}^*$ have a common internal vertex for all $i$ under the assumption that $G_{\dashleftarrow}$ has a directed cycle.
    Suppose to the contrary that $P_i^*$ and $P_{i+1}^*$ have no common internal vertex.
    We only consider the case where the condition (A) in the definition of $\dashleftarrow$ holds for $i \dashleftarrow i+1$, that is, $P^*_{i+1}$ touches $s(P^*_i)$;
    the case for (B) is symmetric.
    As $|P_i^*| \geq 2$, $P_i^*$ has an arc $(s'(P_i^*), x^*)$.
    See \Cref{fig:img7} for an illustration.
    For $x \in N_T(s'(P_i^*))$,
    we denote by $C_x$ the weakly connected component containing $x$ in the polyforest obtained from $T$ by deleting the arc $(s'(P_i^*), x)$.
    Then observe that $V(P^*_{i+1}) \cap C_{x^*} = \emptyset$.
    To see this, if $V(P^*_{i+1}) \cap C_{x^*} \neq \emptyset$, then $P_{i+1}^*$ must have the arc $(s'(P_i^*), x^*)$ as $P^*_{i+1}$ touches $s(P^*_i)$.
    This particularly implies that $s'(P_i^*)$ belongs to $P_{i+1}^*$ and is different from $t(P_{i+1}^*)$ due to the direction of arc $(s'(P^*_i), x^*)$.
    Since $\source{I}$ is an independent set, we have $s'(P_i^*) \neq s(P_{i+1}^*)$.
    Thus $P_i^*$ and $P_{i+1}^*$ have a common internal vertex $s'(P_i^*)$,
    contradicting the assumption that
    $P_i^*$ and $P_{i+1}^*$ have no common internal vertex.
    Hence we obtain $V(P^*_{i+1}) \cap C_{x^*} = \emptyset$.
    
    Suppose $i - 1 = i + 1$, i.e., $i +1 \dashleftarrow i$ (which implies $\ell = 2$).
    Then $P_i^*$ touches $s(P_{i+1}^*)$ or $P_{i+1}^*$ touches $t(P_i^*)$.
    In the former case,
    since $V(P^*_{i+1}) \cap C_{x^*} = \emptyset$ and $\source{I}$ is an independent set,
    $s(P_{i+1}^*)$ must belong to $N_T(s'(P_i^*))$.
    Then we have $s'(P_{i+1}^*) = s'(P_{i}^*)$ by $i \dashleftarrow i+1$,
    which contradicts (P3).
    In the latter case, since $V(P_{i+1}^*) \cap C_{x^*} = \emptyset$,
    we have $x^* = t(P_i^*)$ and $s'(P_i^*) \in V(P_{i+1}^*)$.
    Moreover, neither $s'(P_i^*) = s(P_{i+1}^*)$ nor $s'(P_i^*) = t(P_{i+1}^*)$,
    since $\source{I}$ and $\sink{I}$ are independent sets.
    Thus $s'(P_i^*)$ must be an internal vertex of $P_{i+1}^*$, contradicting the assumption.
    
    Suppose $i - 1 \neq i + 1$ (which implies $\ell \geq 3$).
    Observe that $V(P^*_{i-1}) \cap C_{s(P_i^*)} = \emptyset$.
    To see this, suppose $V(P^*_{i-1}) \cap C_{s(P_i^*)} \neq \emptyset$.
    As $i - 1 \dashleftarrow i$, either $s(P^*_{i-1})$ touches $P^*_{i}$ or $t(P^*_{i})$ touches $P^*_{i-1}$.
    In both cases, $s(P^*_{i})$ touches $P^*_{i - 1}$ and hence $i \dashleftarrow i - 1$, contradicting the minimality of the cycle $(1,2,\dots,\ell,1)$.
    Observe also that $s'(P_i^*) \notin V(P^*_{i-1})$
    as otherwise we obtain $i \dashleftarrow i - 1$,
    which again contradicts the minimality of the cycle $(1,2,\dots,\ell,1)$.
    Here we additionally distinguish the two cases:
    (i) $s'(P_{i}^*) \in V(P_{i+1}^*)$ and (ii) $s'(P_{i}^*) \notin V(P_{i+1}^*)$.
    
    (i) $s'(P_{i}^*) \in V(P_{i+1}^*)$.
    Since $P_i^*$ and $P_{i+1}^*$ have no common internal vertices,
    we have $t(P_{i+1}^*) = s'(P_i^*)$.
    By the minimality of the cycle $(1,2,\dots,\ell,1)$,
    $P_{i-1}^*$ has none of vertices in $N_T[s'(P_i^*)]$.
    This implies, together with $i - 1 \dashleftarrow i$ and $V(P^*_{i-1}) \cap C_{s(P_i^*)} = \emptyset$, that $V(P_{i-1}^*) \subseteq C_{x^*}$
    and $x^* \notin V(P_{i-1}^*)$.
    By $i+1 \dashleftarrow i+2 \dashleftarrow \cdots \dashleftarrow i-1$,
    there is an index $m$ with $i-1 \neq m \neq i+1$
    such that $x^* \in V(P^*_m)$.
    This implies $m \dashleftarrow i+1$,
    contradicting the minimality of the cycle.
    
    (ii) $s'(P_{i}^*) \notin V(P_{i+1}^*)$.
    In this case, $P_{i+1}^*$ has a vertex in $N[s(P_i^*)] \setminus \{ s'(P_i^*) \}$ and does not have the arc $(s(P_i^*), s'(P_i^*))$.
    Thus we have $V(P_{i+1}^*) \subseteq C_{s(P_i^*)}$.
    By $i+1 \dashleftarrow i+2 \dashleftarrow \cdots \dashleftarrow i-1$,
    there is an index $m$ with $i-1 \neq m \neq i+1$
    such that $s'(P_i^*) \in V(P^*_m)$.
    This implies $i \dashleftarrow m$,
    contradicting the minimality of the cycle.
    
    This completes the proof.
    \end{proof}

\subsection{Algorithms}\label{sec:algo}
In this subsection, we provide an algorithm for checking the reconfigurability in $\bigO{|V|}$ time,
and that for constructing a reconfiguration sequence in $\bigO{|V|^2}$ time (if the answer is affirmative), proving \cref{thm:polytree}.

For $U \subseteq V$,
we define $N^+ (U)$ (resp.\ $N^-(U)$) as $N^+ (U) \coloneqq \bigcup_{u \in U} N^+(u) \setminus U$ (resp.\ $N^- (U) \coloneqq \bigcup_{u \in U} N^-(u) \setminus U$).
For a mapping $f \colon \source{I} \to N^+(\source{I})$,
let $f(\source{I})$ denote the image of $f$, i.e.,
$f(\source{I}) \coloneqq \inset{ f(x) }{x \in \source{I}}$.
Similarly, the image of $g \colon \sink{I} \to N^-(\sink{I})$ is denoted as $g(\sink{I})$.

\subsubsection{Algorithm for checking the reconfigurability}
\label{sec:check_fast}

Suppose that there is a set $\mathcal{P} = \set{P_1, \dots, P_k}$ of directed paths satisfying the path-set conditions.
Then the set of paths obtained from $\mathcal{P}$ by exchanging each $P_i$ with the subpath from $s'(P_i)$ to $t'(P_i)$ is a directed path matching from $\inset{s'(P_i)}{i \in [k]}$ to $\inset{t'(P_i)}{i \in [k]}$.
In this case, the mappings $f \colon \source{I} \to N^+(\source{I})$ and $g \colon \sink{I} \to N^-(\sink{I})$ defined by $f(s(P_i)) \coloneqq s'(P_i)$ and $g(t(P_i)) \coloneqq t'(P_i)$, respectively,
satisfy the following four conditions:
\begin{description}
    \item[{\rm (C1)}] $f$ and $g$ are injective,
    \item[{\rm (C2)}] $(s, f(s)) \in A$ for all $s \in \source{I}$,
    \item[{\rm (C3)}] $(g(t), t) \in A$ for all $t \in \sink{I}$, and
    \item[{\rm (C4)}] $w(e; f(\source{I}), g(\sink{I})) \geq 0$ for each arc $e$.
\end{description}
In particular,
(C1) follows from (P3) and (P4),
and (C4) follows from \cref{lem:chara_matching}.

Conversely, if there are mappings $f \colon \source{I} \to N^+(\source{I})$ and $g \colon \sink{I} \to N^-(\sink{I})$ satisfying the above four conditions (C1)--(C4),
then we can construct a set of directed paths satisfying the path-set conditions (P1)--(P4) as follows.
By \cref{lem:chara_matching} and (C4), there exists a directed path matching $\mathcal{P}' = \set{P'_1, \dots, P'_k}$ from $f (\source{I} )$ to $g(\sink{I} )$.
Define a set $\mathcal{P} = \set{P_1, \dots, P_k}$ of paths from $\mathcal{P}'$ by appending $f^{-1}(s(P'_i) )$ and $g^{-1}(t(P'_i) )$ before the source and after the sink for each $P'_i$, respectively.
Then $|P_i| \ge 2$ for each $i \in [k]$, and $\mathcal P$ is a directed path matching from $\source{I}$ and $\sink{I}$.
Moreover,
since $\mathcal P'$ is a directed path matching from $f(\source{I})$ to $g(\sink{I})$, we have $s'(P_i) \neq s'(P_j)$ and $t'(P_i) \neq t'(P_j)$ for $i \neq j$, implying that $\mathcal P$ satisfies the path-set conditions.

By the above argument, we obtain the following necessary and sufficient conditions for the reconfigurability:
\begin{lemma}\label{lem:ns_path_set}
    For an instance without rigid tokens and blocking arcs,
    there exists a set of paths satisfying the path-set conditions if and only if there exist mappings $f \colon \source{I} \to N^+(\source{I})$ and $g \colon \sink{I} \to N^-(\sink{I})$ satisfying {\rm (C1)}--{\rm (C4)}.
\end{lemma}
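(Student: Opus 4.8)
The plan is to establish a correspondence between a family $\mathcal{P}$ of directed paths and the pair $(f,g)$ by peeling off, respectively gluing on, the first and last arc of each path: $f$ and $g$ are meant to record which out-neighbor, respectively in-neighbor, the first and last hop of each token uses, while the interiors of the paths are governed purely through $w(\cdot\,;f(\source{I}),g(\sink{I}))$ via \cref{lem:chara_matching}. Both directions are then argued directly.

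For the ``only if'' direction, suppose $\mathcal{P} = \set{P_1,\dots,P_k}$ satisfies (P1)--(P4). By (P2) each $s \in \source{I}$ equals $s(P_i)$ for a unique $i$, and $s'(P_i)$ is defined because $|P_i| \ge 2$ by (P1); I would set $f(s) \coloneqq s'(P_i)$, and symmetrically $g(t) \coloneqq t'(P_i)$ for the unique $i$ with $t(P_i) = t$. Since $\source{I}$ is independent and $s(P_i)$ is adjacent to $s'(P_i)$, we get $f(s) \notin \source{I}$, so $f \colon \source{I} \to N^+(\source{I})$ has the right codomain, and likewise for $g$; (C2) and (C3) hold because $(s(P_i), s'(P_i))$ and $(t'(P_i), t(P_i))$ are the first and last arcs of $P_i$, and (C1) is immediate from (P3) and (P4). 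For (C4), replacing each $P_i$ by its subpath from $s'(P_i)$ to $t'(P_i)$ (a directed path of length $\ge 0$, as $|P_i| \ge 2$) yields a directed path matching from $f(\source{I})$ to $g(\sink{I})$, so \cref{lem:chara_matching} gives $w(e; f(\source{I}), g(\sink{I})) \ge 0$ for every arc $e$.

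For the ``if'' direction, assume $f,g$ satisfy (C1)--(C4). By (C4) and \cref{lem:chara_matching} there is a directed path matching $\mathcal{P}' = \set{P'_1,\dots,P'_k}$ from $f(\source{I})$ to $g(\sink{I})$; (C1) makes $f^{-1}$ and $g^{-1}$ well defined on the relevant source and sink sets, and (C2), (C3) supply the arcs $(f^{-1}(s(P'_i)), s(P'_i))$ and $(t(P'_i), g^{-1}(t(P'_i)))$. I would let $P_i$ be obtained from $P'_i$ by prepending the former arc and appending the latter, and set $\mathcal{P} \coloneqq \set{P_1,\dots,P_k}$. The step that needs care is checking that each $P_i$ is a genuine vertex-simple directed path: a polytree has no antiparallel arcs, so if $f^{-1}(s(P'_i))$ lay on $P'_i$ it would be forced to be the second vertex of $P'_i$, joined to $s(P'_i)$ by an arc of the wrong orientation — a contradiction; the same reasoning keeps $g^{-1}(t(P'_i))$ off $P'_i$ and rules out $f^{-1}(s(P'_i)) = g^{-1}(t(P'_i))$. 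Given this, $|P_i| = |P'_i| + 2 \ge 2$ yields (P1); the sources and sinks of $\mathcal{P}$ are $f^{-1}(f(\source{I})) = \source{I}$ and $g^{-1}(g(\sink{I})) = \sink{I}$ and are pairwise distinct, giving (P2); and since $s'(P_i) = s(P'_i)$ and $t'(P_i) = t(P'_i)$ are the sources and sinks of $\mathcal{P}'$, they are distinct across $i$, giving (P3) and (P4).

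The only genuine obstacle I foresee is the simplicity check in the ``if'' direction — ensuring the prepended and appended vertices are new and distinct from each other — which is precisely where the standing assumption that polytrees have no antiparallel arcs enters and should be invoked explicitly; everything else is bookkeeping with the definitions of $s'$, $t'$, $N^+$, $N^-$ together with two applications of \cref{lem:chara_matching}.
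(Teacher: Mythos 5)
Your proof is correct and takes essentially the same route as the paper: in the ``only if'' direction you define $f$ and $g$ by peeling the first and last arcs off each $P_i$ and invoke \cref{lem:chara_matching} on the residual matching from $f(\source{I})$ to $g(\sink{I})$; in the ``if'' direction you obtain $\mathcal{P}'$ from \cref{lem:chara_matching} and (C4) and glue the missing first and last arcs back on via $f^{-1}$ and $g^{-1}$. One small credit: you explicitly verify that prepending $f^{-1}(s(P'_i))$ and appending $g^{-1}(t(P'_i))$ produces genuine simple directed paths, observing that a repeated vertex or a coincidence $f^{-1}(s(P'_i)) = g^{-1}(t(P'_i))$ would force either an undirected cycle in $\und{T}$ or a pair of antiparallel arcs (which polytrees are defined to exclude); the paper leaves this simplicity check implicit.
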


By \cref{lem:path_set,lem:ns_path_set},
for checking the reconfigurability
it suffices to compute $f$ and $g$ satisfying (C1)--(C4) or determine that such $f$ and $g$ do not exist;
it can be done in $\bigO{|V|}$ time in a greedy manner as follows. 
In the following description, we regard $T$ as a rooted tree with an arbitrary root. 
We initialize the values of $f$ and $g$ as $f(s) \coloneqq \bot$ for all $s \in \source{I}$ and $g(t) \coloneqq \bot$ for all $t \in \sink{I}$ ($\bot$ means ``undefined'').
To let $f$ and $g$ satisfy the conditions in \Cref{lem:ns_path_set}, we define each value of $f$ and $g$ one by one.
We write $\source{I}_{\mathrm{undef}} \subseteq \source{I}$ (resp.\ $\source{I}_{\mathrm{def}} \subseteq \source{I}$) as the set of the vertices for which the values of $f$ have been undefined (resp.\ defined).
We also define $\sink{I}_{\mathrm{undef}}$ and $\sink{I}_{\mathrm{def}}$ in the same way.
Throughout the following process to determine $f$ and $g$, we keep an invariant that the conditions (C1)--(C3)
hold for $s \in \source{I}_{\mathrm{def}}$ and $t \in \sink{I}_{\mathrm{def}}$, moreover, for each arc $e$, $w(e; \source{I}_{\mathrm{undef}} \cup f(\source{I}_{\mathrm{def}}), \sink{I}_{\mathrm{undef}} \cup g(\sink{I}_{\mathrm{def}})) \geq 0$.

In the following, we describe each step of the algorithm.
Let $v^* \in  \source{I}_{\mathrm{undef}} \cup \sink{I}_{\mathrm{undef}}$ be a vertex with the largest depth among $\source{I}_{\mathrm{undef}} \cup \sink{I}_{\mathrm{undef}}$ in the rooted tree.
If $v^* \in \source{I}_{\mathrm{undef}}$,
we define
\begin{align*}
    N_{\mathrm{cand}}(v^*) \coloneqq \inset{u \in N^+(v^*) \setminus f(\source{I}_{\mathrm{def}})}{w((v^*, u); \source{I}_{\mathrm{undef}} \cup f(\source{I}_{\mathrm{def}}), \sink{I}_{\mathrm{undef}} \cup g(\sink{I}_{\mathrm{def}})) \geq 1}.
\end{align*}
Otherwise, i.e., $v^* \in \sink{I}_{\mathrm{undef}} \setminus \source{I}_{\mathrm{undef}}$,
define
\begin{align*}
    N_{\mathrm{cand}}(v^*) \coloneqq \inset{u \in N^-(v^*) \setminus g(\sink{I}_{\mathrm{def}})}{w((u, v^*); \source{I}_{\mathrm{undef}} \cup f(\source{I}_{\mathrm{def}}), \sink{I}_{\mathrm{undef}} \cup g(\sink{I}_{\mathrm{def}})) \geq 1}.
\end{align*}
If $N_{\mathrm{cand}}(v^*)$ is empty,
terminate the execution.
Otherwise, choose $u^* \in N_{\mathrm{cand}}(v^*)$ with the largest depth among $N_{\mathrm{cand}}(v^*)$ and define $f(v^*) \coloneqq u^*$ if $v^* \in \source{I}_{\mathrm{undef}}$ and define $g(v^*) \coloneqq u^*$ otherwise.
When $f(v^*)$ or $g(v^*)$ is newly defined in this step, we accordingly update $\source{I}_{\mathrm{def}}$, $\source{I}_{\mathrm{undef}}$, $\sink{I}_{\mathrm{def}}$, and $\sink{I}_{\mathrm{undef}}$.

Note that each vertex in $\source{I} \cup \sink{I}$ is selected as $v^*$ at most twice.
After the execution of the above algorithm, output $f$ and $g$ if $f(s) \neq \bot$ for all $s \in \source{I}$ and $g(t) \neq \bot$ for all $t \in \sink{I}$; otherwise, we conclude that no such $f$ and $g$ exist.
In each iteration, only one arc $(v^*, u^*)$ or $(u^*, v^*)$ decreases its $w$-value by $1$: If $v^* \in \source{I}_{\mathrm{undef}}$, 
\begin{align*}
    &w(e; \source{I}_{\mathrm{undef}} \setminus \{v^*\} \cup f(\source{I}_{\mathrm{def}}) \cup \{u^*\}, \sink{I}_{\mathrm{undef}} \cup g(\sink{I}_{\mathrm{def}})) \\
    &=
    \begin{cases}
       w(e; \source{I}_{\mathrm{undef}} \cup f(\source{I}_{\mathrm{def}}), \sink{I}_{\mathrm{undef}} \cup g(\sink{I}_{\mathrm{def}})) & \text{if $e\neq (v^*, u^*)$}\\
       w(e; \source{I}_{\mathrm{undef}} \cup f(\source{I}_{\mathrm{def}}), \sink{I}_{\mathrm{undef}} \cup g(\sink{I}_{\mathrm{def}})) - 1 & \text{if $e =  (v^*, u^*)$};
    \end{cases}
\end{align*}
and otherwise $w(e; \source{I}_{\mathrm{undef}}  \cup f(\source{I}_{\mathrm{def}}), \sink{I}_{\mathrm{undef}} \setminus \{v^*\} \cup g(\sink{I}_{\mathrm{def}}) \cup \{u^*\})$ can be computed in a symmetric fashion.
Using this property, we can implement the algorithm that runs in $\bigO{|V|}$ time.
More precisely, we maintain the values of $w$ and the indicator functions of $\source{I}_{\mathrm{def}}$ and $\sink{I}_{\mathrm{def}}$ in tables.
By referring the tables, we can compute $N_{\mathrm{cand}}(v^*)$ in $\bigO{|N(v^*)|}$ time.
Moreover, we can update the tables in $\bigO{1}$ time.
Therefore each iteration runs in $\bigO{|N(v^*)|}$ time, and thus the algorithm runs in total in $\bigO{|V|}$ time
since each edge is touched twice.

\begin{lemma}
    If there exist mappings $f$ and $g$ that satisfy the conditions in \Cref{lem:ns_path_set}, then the algorithm described above correctly outputs one of such mappings. Otherwise the algorithm reports that no such mappings exist.
\end{lemma}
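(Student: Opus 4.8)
The plan is to prove correctness of the greedy algorithm by establishing two things: (i) the algorithm maintains the stated invariant throughout its execution, so that whenever it completes with all values defined the output pair $(f, g)$ satisfies (C1)--(C4); and (ii) whenever a valid pair $(f, g)$ exists, the algorithm does not terminate prematurely, i.e.\ it always finds a defined value for the chosen vertex $v^*$. Part (i) is largely bookkeeping: (C2) and (C3) hold by construction since we only ever set $f(v^*)$ to an out-neighbor of $v^*$ (resp.\ $g(v^*)$ to an in-neighbor); (C1) holds because we exclude already-used targets via the sets $f(\source{I}_{\mathrm{def}})$ and $g(\sink{I}_{\mathrm{def}})$ in the definition of $N_{\mathrm{cand}}(v^*)$; and the nonnegativity condition $w(e; \source{I}_{\mathrm{undef}} \cup f(\source{I}_{\mathrm{def}}), \sink{I}_{\mathrm{undef}} \cup g(\sink{I}_{\mathrm{def}})) \ge 0$ is preserved because each step only decreases the $w$-value of the single arc $(v^*, u^*)$ (or $(u^*, v^*)$) by exactly one, and we chose $u^*$ precisely so that that arc had $w$-value at least one beforehand. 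At termination with all values defined, $\source{I}_{\mathrm{undef}} = \sink{I}_{\mathrm{undef}} = \emptyset$, so the invariant becomes exactly (C4).

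For part (ii), the key is an exchange argument showing the greedy choice is ``safe'': if a valid completion exists for the current partial solution, then it exists for the partial solution after extending it by the greedy choice. I would proceed by induction on the number of undefined vertices. Suppose $(\hat f, \hat g)$ is any valid pair extending the current partial $(f, g)$, and the algorithm selects $v^*$, the deepest undefined vertex. First, $N_{\mathrm{cand}}(v^*) \neq \emptyset$: the value $\hat f(v^*)$ (or $\hat g(v^*)$) witnesses a nonempty candidate set, because the arc $(v^*, \hat f(v^*))$ must carry positive flow in the path matching from $\hat f(\source{I})$ to $\hat g(\sink{I})$ guaranteed by \cref{lem:chara_matching} applied with (C4) — and one translates this back to the ``mixed'' $w$-value over $\source{I}_{\mathrm{undef}} \cup f(\source{I}_{\mathrm{def}})$, $\sink{I}_{\mathrm{undef}} \cup g(\sink{I}_{\mathrm{def}})$ using the fact that $v^*$ still lies in $\source{I}_{\mathrm{undef}}$. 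Second, letting $u^*$ be the greedy (deepest) candidate, I modify $(\hat f, \hat g)$ into another valid completion in which $v^* \mapsto u^*$: if $\hat f(v^*) = u^*$ we are done; otherwise $u^*$ is used by $\hat f$ for some other source $s'$ (or is unused), and since $u^*$ is a child of $v^*$ in the rooted tree while $v^*$ is the deepest undefined vertex, the structure of the polytree forces the two targets $\hat f(v^*)$ and $u^*$ to be "swappable" — reassigning $v^* \mapsto u^*$ and $s' \mapsto \hat f(v^*)$ keeps injectivity, keeps (C2)/(C3), and changes $w(e; \hat f(\source{I}), \hat g(\sink{I}))$ on no arc (or only along a path between two children of $v^*$, which can be absorbed), preserving (C4). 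By the induction hypothesis applied to the smaller instance obtained after fixing $v^* \mapsto u^*$, the algorithm completes successfully.

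The main obstacle I expect is the exchange step in part (ii): making precise why choosing the \emph{deepest} candidate $u^*$ is the right greedy rule and why it never blocks a future assignment. The subtlety is that reassigning a source from target $u^*$ to target $\hat f(v^*)$ is only harmless if the path from one child of $v^*$ to another child of $v^*$ does not create a negative $w$-value somewhere; here one uses that both children hang off the same vertex $v^*$, that $v^*$ is deepest among undefined vertices (so all deeper vertices are already "resolved" and contribute fixed flow), and that in a polytree the $v^*$-$u^*$ arc is a cut arc whose two sides have a clean flow balance. I would isolate this as a short lemma: swapping the targets of two sources both of which are children of a common vertex $v$ preserves (C1)--(C4). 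Granting that, and the already-established invariant, the correctness claim follows by the induction outlined above; the complexity bound of $\bigO{|V|}$ was argued separately in the surrounding text.
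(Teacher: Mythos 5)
Your part (i) is fine and matches the paper's bookkeeping. Your part (ii) has a genuine gap in the exchange step, and the paper resolves it by a different, swap-free argument.

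You propose that if $u^*$ is already used in $\hat f$ by some other source $s'$, you can reassign $v^* \mapsto u^*$ and $s' \mapsto \hat f(v^*)$. This does not preserve (C2): $\hat f(v^*)$ is an out-neighbor of $v^*$, while any such $s'$ is an in-neighbor of $u^*$ other than $v^*$. Since $u^*$ is a child of $v^*$, such an $s'$ lies strictly inside the subtree of $u^*$, hence at tree-distance at least~$2$ from $\hat f(v^*)$ (which is either a sibling of $u^*$ or the parent of $v^*$). So $(s', \hat f(v^*)) \notin A$, and the ``swap lemma'' you want (two sources that are both children of a common vertex) does not apply here: the two relevant sources are $v^*$ and a neighbor of $u^*$, not siblings.

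The paper's proof shows that this problematic case never arises: $u^* \notin f'(\source{I})$, so no swap is needed, only the single reassignment $f''(v^*) := u^*$. The key structural observations, all riding on the fact that vertices are processed in decreasing order of depth, are: (a) when $f'(v^*) = x^* \neq u^*$, the greedy (deepest) candidate $u^*$ must be a child of $v^*$, because $N_{\mathrm{cand}}(v^*)$ contains at most one non-child (the parent); (b) every neighbor of $u^*$ other than $v^*$ is deeper than $v^*$, hence already has its $f$-value fixed, and none of those equal $u^*$ by the definition of $N_{\mathrm{cand}}(v^*)$ (which excludes $f(\source{I}_{\mathrm{def}})$), giving $u^* \notin f'(\source{I})$ and preserving (C1); and (c) again by the depth order, the only remaining assignment that can decrease $w\bigl((v^*,u^*)\bigr)$ is $f(v^*) := u^*$, so $w\bigl((v^*, u^*); f'(\source{I}), g'(\sink{I})\bigr) > 0$, which lets the reassignment reduce that single $w$-value by one while keeping it nonnegative. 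Your sketch gestures at the right ingredients (deepest candidate, cut arc, fixed flow from resolved vertices) but misidentifies the exchange as a source-target swap; without establishing $u^* \notin f'(\source{I})$, the argument does not go through.
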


\begin{proof}
    By the invariant that we keep, it is clear that the algorithm outputs mappings $f$ and $g$ satisfying the conditions in \Cref{lem:ns_path_set}.
    We claim that our additional definition of $f$ or $g$ in each step of the algorithm does not compromise the possibility that $f$ and $g$ can be extended to desired mappings by further definitions. 
    In each step of the algorithm, assume that $N_{\mathrm{cand}}(v^*)$ is not empty and that there exist desired mappings $f'$ and $g'$ with $f'(s) = f(s)$ for all $s \in \source{I}_{\mathrm{def}}$ and $g'(t) = g(t)$ for all $t \in \sink{I}_{\mathrm{def}}$. 
    We also assume $v^* \in \source{I}_{\mathrm{undef}}$ (as the case $v^* \in \sink{I}_{\mathrm{undef}} \setminus \source{I}_{\mathrm{undef}}$ is symmetric).
    Then we show that there also exist desired $f''$ and $g''$ with $f''(s) = f(s)$ for all $s \in \source{I}_{\mathrm{def}}$, $g''(t) = g(t)$ for all $t \in \sink{I}_{\mathrm{def}}$, and $f''(v^*) = u^*$.
    If $f'(v^*) = u^*$, we are done.
    We consider the other case, $f'(v^*) = x^*$ with $x^* \neq u^*$. 
    Observe that in this case, the depth of $u^*$ is exactly $1$ greater than that of $v^*$.
    This is because of the facts $x^* \in N_{\mathrm{cand}}(v^*)$ and 
    that $N_{\mathrm{cand}}(v^*)$ includes at most one vertex with the depth exactly $1$ smaller than that of $v^*$ by the structure of the rooted polytree $T$.
    Since we define each value of $f$ and $g$ in descending order of depths of vertices, the only way to decrease the $w$ value of $(v^*, u^*)$ when extending the definitions of current $f$ and $g$ is to set $f(v^*) = u^*$. Thus $w((v^*, u^*); f'(\source{I}), g'(\sink{I})) > 0$.
    Similarly, since values of $f$ for vertices in $N(u^*) \cap \source{I}$ except for $v^*$ have been already defined, we have $u^* \notin f'(\source{I})$.
    By these facts, we can obtain mappings $f''$ and $g''$ satisfying the conditions in \Cref{lem:ns_path_set} by setting $g'' \coloneqq g'$, $f''(v^*) \coloneqq u^*$, and $f''(v) \coloneqq f'(v)$ for all $v \in \source{I} \setminus \{v^*\}$.    
    Because of the above arguments, if the input is a yes-instance, the algorithm does not terminate in the middle and at last outputs desired $f$ and $g$. (If the input is a no-instance, obviously the algorithm terminates before the whole definition.)
\end{proof}

\subsubsection{Algorithm for constructing a reconfiguration sequence}\label{sec:construct_fast}
By \cref{cor:length}, if the instance is a yes-instance, all reconfiguration sequences have the same length.
In this subsection, we present an algorithm to construct one of them in $\bigO{|V|^2}$ time.
We first assume that two mappings $f$ and $g$ satisfying conditions (C1)--(C4) have already been computed.
If such $f$ and $g$ do not exist, the instance is a no-instance by \Cref{lem:ns_path_set}.
Here, our target is to fill in gaps between $f ( \source{I} )$ and $g ( \sink{I} )$ avoiding biased path pairs.
For preparation of further arguments, we define \emph{weakly biased path pairs} by relaxing ``common internal vertex'' to ``common vertex'' in the definition of biased path pairs.
If there is a directed path matching $\mathcal{P} = \set{P_1, \dots, P_k}$ from $f ( \source{I} )$ to $g ( \sink{I} )$ without any weakly biased path pairs (not necessarily satisfying the path-set conditions),
we can easily construct a directed path matching $\mathcal{P'} = \set{P'_1, \dots, P'_k}$ from $\source{I}$ to $\sink{I}$ satisfying the path-set conditions: $P'_i$ is obtained from $P_i$ by appending $f^{-1}(s(P_i))$ before $P_i$ and $g^{-1}(t(P_i))$ after $P_i$.
Thus, in the following, it suffices to show that $\mathcal{P}$ can be constructed in $\bigO{|V|^2}$ time.

Fix an arbitrary vertex $r \in V$.
For each $v \in V$, we denote by $P_{r, v}$ the unique path between $r$ and $v$ in $\und{T}$.
Let $\overrightarrow{P_{r, v}}$ be a directed path obtained by orienting each edge of $P_{r, v}$ from $r$ to $v$.
We define $d_r(v) = \mathrm{fwd}(\overrightarrow{P_{r, v}}) - \mathrm{rev}(\overrightarrow{P_{r, v}})$, where $\mathrm{fwd}(\overrightarrow{P_{r, v}})$ and $\mathrm{rev}(\overrightarrow{P_{r, v}})$ denote the numbers of arcs in $T$ that $\overrightarrow{P_{r, v}}$ passes in the forward and reverse directions, respectively.

Using the values of $d_r(v)$ for $v \in V$, we iteratively construct a directed path matching $\mathcal{P} = \{P_1, \dots, P_k\}$ from $f(\source{I})$ to $g(\sink{I})$ as follows.
In the $i$-th step of the algorithm, we construct $P_i$.
Let $X_i \subseteq f(\source{I})$ and $Y_i \subseteq g(\sink{I})$ be the sets of vertices that are not chosen for the sources and sinks of $P_1, \dots, P_{i-1}$, respectively.
Throughout the execution of the algorithm, we keep an invariant that for each arc $e$, $w(e; X_i, Y_i) \geq 0$, which means that there is a directed path matching from $X_i$ to $Y_i$ by \cref{lem:chara_matching}.

Let $x \in X_i$ be a vertex with the smallest value of $d_r(x)$ among $X_i$.
Let $Y'_i \subseteq Y_i$ be the set of vertices to which there is a directed path from $x$ consisting of only arcs with $w(e; X_i, Y_i) > 0$.
Since there exists some directed path matching from $X_i$ to $Y_i$ by our invariant that $w(e, X_i, Y_i) \ge 0$ for $e \in A$, the set $Y'_i$ is not empty. 
Choose an arbitrary vertex $y \in Y'_i$ with the smallest value of $d_r(y)$.
We set $P_i$ to the unique directed path from $x$ to $y$ in $T$.
Since we use only arcs $e$ with $w(e; X_i, Y_i) > 0$ to construct $P_i$, we have $w(e; X_{i+1}, Y_{i+1}) \geq 0$, where $X_{i + 1} = X_i \setminus \{x\}$ and $Y_{i + 1} = Y_{i} \setminus \{y\}$, for all arcs and the invariant still holds.
We repeat this until $X_i = \emptyset$ (and hence $Y_i = \emptyset$).

Here, we show that the obtained directed path matching $\set{P_1, \dots, P_k}$ does not contain any weakly biased path pairs.
Assume that $P_i$ and $P_j$ ($i < j$) have a common vertex $v$.
By the choice of $x$ in the construction above, $d_r(s(P_i)) \le d_r(s(P_j))$ holds.
Thus, we have
\[
\mathrm{dist}(s(P_i), v) = -d_r(s(P_i)) + d_r(v) \ge - d_r(s(P_j)) + d_r(v) = \mathrm{dist}(s(P_j), v).
\]
As $v$ is a common vertex of $P_i$ and $P_j$, $t(P_j) \in Y'_i$ and thus $d_r(t(P_i)) \le d_r(t(P_j))$ holds
by the choice of $y$. Then, similarly we have
\[
\mathrm{dist}(v, t(P_i)) = -d_r(v) + d_r(t(P_i)) \le -d_r(v) + d_r(t(P_j)) = \mathrm{dist}(v, t(P_j)).
\]
Therefore, $P_i$ and $P_j$ do not form a weakly biased path pair.
We can compute each $P_i$ in $\bigO{|V|}$ time, and thus the whole running time (including computing $f$ and $g$) is $\bigO{|V|^2}$.

\section*{Acknowledgments}
This work is partially supported by JSPS KAKENHI Grant numbers JP18H04091, JP18K11168, JP18K11169, JP19K11814, JP19K20350, JP19J21000, JP20K19742, JP20K23323, JP20H05793, JP20H05795, JP21H03499, and JP21K11752, and JST, CREST Grant Number JPMJCR18K3, Japan.


\begin{thebibliography}{10}

    \bibitem{ACHKMSV20}
    Oswin Aichholzer, Jean Cardinal, Tony Huynh, Kolja Knauer, Torsten M{\"u}tze,
      Raphael Steiner, and Birgit Vogtenhuber.
    \newblock Flip distances between graph orientations.
    \newblock {\em Algorithmica}, 83:116--143, 2021.
    
    \bibitem{BBDLM20}
    Valentin Bartier, Nicolas Bousquet, Cl{\'{e}}ment Dallard, Kyle Lomer, and
      Amer~E. Mouawad.
    \newblock On girth and the parameterized complexity of token sliding and token
      jumping.
    \newblock In {\em Proceedings of the 31st International Symposium on Algorithms
      and Computation (ISAAC 2020)}, volume 181 of {\em LIPIcs}, pages 44:1--44:17,
      2020.
    
    \bibitem{BKLMOS20}
    R{\'e}my Belmonte, Eun~Jung Kim, Michael Lampis, Valia Mitsou, Yota Otachi, and
      Florian Sikora.
    \newblock Token sliding on split graphs.
    \newblock {\em Theory of Computing Systems}, 65:662--686, 2021.
    
    \bibitem{BB17}
    Marthe Bonamy and Nicolas Bousquet.
    \newblock Token sliding on chordal graphs.
    \newblock In {\em Proceedings of the 43rd International Workshop on
      Graph-Theoretic Concepts in Computer Science (WG 2017)}, volume 10520 of {\em
      LNCS}, pages 127--139, 2017.
    
    \bibitem{BBHIKMMW19}
    Marthe Bonamy, Nicolas Bousquet, Marc Heinrich, Takehiro Ito, Yusuke Kobayashi,
      Arnaud Mary, Moritz M{\"{u}}hlenthaler, and Kunihiro Wasa.
    \newblock The perfect matching reconfiguration problem.
    \newblock In {\em Proceedings of the 44th International Symposium on
      Mathematical Foundations of Computer Science (MFCS 2019)}, volume 138 of {\em
      LIPIcs}, pages 80:1--80:14, 2019.
    
    \bibitem{BHIKMMSW20}
    Marthe Bonamy, Marc Heinrich, Takehiro Ito, Yusuke Kobayashi, Haruka Mizuta,
      Moritz M{\"{u}}hlenthaler, Akira Suzuki, and Kunihiro Wasa.
    \newblock Diameter of colorings under {Kempe} changes.
    \newblock {\em Theoretical Computer Science}, 838:45--57, 2020.
    
    \bibitem{BC09}
    Paul~S. Bonsma and Luis Cereceda.
    \newblock Finding paths between graph colourings: {PSPACE}-completeness and
      superpolynomial distances.
    \newblock {\em Theoretical Computer Science}, 410(50):5215--5226, 2009.
    
    \bibitem{BKW14}
    Paul~S. Bonsma, Marcin Kaminski, and Marcin Wrochna.
    \newblock Reconfiguring independent sets in claw-free graphs.
    \newblock In {\em Proceedings of the 14th Scandinavian Symposium and Workshops
      on Algorithm Theory (SWAT 2014)}, volume 8503 of {\em LNCS}, pages 86--97,
      2014.
    
    \bibitem{BHIM19}
    Nicolas Bousquet, Tatsuhiko Hatanaka, Takehiro Ito, and Moritz
      M{\"{u}}hlenthaler.
    \newblock Shortest reconfiguration of matchings.
    \newblock In {\em Proceedings of the 45th International Workshop on
      Graph-Theoretic Concepts in Computer Science (WG 2019)}, volume 11789 of {\em
      LNCS}, pages 162--174, 2019.
    
    \bibitem{BMP17}
    Nicolas Bousquet, Arnaud Mary, and Aline Parreau.
    \newblock Token jumping in minor-closed classes.
    \newblock In {\em Proceedings of the 21st International Symposium on
      Fundamentals of Computation Theory (FCT 2017)}, volume 10472 of {\em LNCS},
      pages 136--149, 2017.
    
    \bibitem{CFKLMPPS15}
    Marek Cygan, Fedor~V. Fomin, Lukasz Kowalik, Daniel Lokshtanov, Daniel Marx,
      Marcin Pilipczuk, Michal Pilipczuk, and Saket Saurabh.
    \newblock {\em Parameterized Algorithms}.
    \newblock Springer Publishing Company, Incorporated, 1st edition, 2015.
    
    \bibitem{DDFHIOOUY15}
    Erik~D. Demaine, Martin~L. Demaine, Eli Fox{-}Epstein, Duc~A. Hoang, Takehiro
      Ito, Hirotaka Ono, Yota Otachi, Ryuhei Uehara, and Takeshi Yamada.
    \newblock Linear-time algorithm for sliding tokens on trees.
    \newblock {\em Theoretical Computer Science}, 600:132--142, 2015.
    
    \bibitem{FHOU15}
    Eli Fox{-}Epstein, Duc~A. Hoang, Yota Otachi, and Ryuhei Uehara.
    \newblock Sliding token on bipartite permutation graphs.
    \newblock In {\em Proceedings of the 26th International Symposium on Algorithms
      and Computation (ISAAC 2015)}, volume 9472 of {\em LNCS}, pages 237--247,
      2015.
    
    \bibitem{FPS01}
    Komei Fukuda, Alain Prodon, and Tadashi Sakuma.
    \newblock Notes on acyclic orientations and the shelling lemma.
    \newblock {\em Theoretical Computer Science}, 263(1-2):9--16, 2001.
    
    \bibitem{GZ83}
    Curtis Greene and Thomas Zaslavsky.
    \newblock On the interpretation of {W}hitney numbers through arrangements of
      hyperplanes, zonotopes, non-{R}adon partitions, and orientations of graphs.
    \newblock {\em Transactions of the American Mathematical Society},
      280(1):97--126, 1983.
    
    \bibitem{HIMNOST16}
    Arash Haddadan, Takehiro Ito, Amer~E. Mouawad, Naomi Nishimura, Hirotaka Ono,
      Akira Suzuki, and Youcef Tebbal.
    \newblock The complexity of dominating set reconfiguration.
    \newblock {\em Theoretical Computer Science}, 651:37--49, 2016.
    
    \bibitem{HD05}
    Robert~A. Hearn and Erik~D. Demaine.
    \newblock {PSPACE}-completeness of sliding-block puzzles and other problems
      through the nondeterministic constraint logic model of computation.
    \newblock {\em Theoretical Computer Science}, 343(1-2):72--96, 2005.
    
    \bibitem{HU16}
    Duc~A. Hoang and Ryuhei Uehara.
    \newblock Sliding tokens on a cactus.
    \newblock In {\em Proceedings of the 27th International Symposium on Algorithms
      and Computation (ISAAC 2016)}, volume~64 of {\em LIPIcs}, pages 37:1--37:26,
      2016.
    
    \bibitem{IDHPSUU11}
    Takehiro Ito, Erik~D. Demaine, Nicholas J.~A. Harvey, Christos~H.
      Papadimitriou, Martha Sideri, Ryuhei Uehara, and Yushi Uno.
    \newblock On the complexity of reconfiguration problems.
    \newblock {\em Theoretical Computer Science}, 412(12-14):1054--1065, 2011.
    
    \bibitem{TIKKKMNOO22}
    Takehiro Ito, Yuni Iwamasa, Naonori Kakimura, Naoyuki Kamiyama, Yusuke
      Kobayashi, Shun ichi Maezawa, Yuta Nozaki, Yoshio Okamoto, and Kenta Ozeki.
    \newblock Monotone edge flips to an orientation of maximum edge-connectivity
      \`{a} la {N}ash-{W}illiams.
    \newblock In {\em Proceedings of the 2022 ACM-SIAM Symposium on Discrete
      Algorithms (SODA 2022)}, pages 1342--1355, 2022.
    
    \bibitem{ito2021reconfiguring}
    Takehiro Ito, Yuni Iwamasa, Yasuaki Kobayashi, Yu~Nakahata, Yota Otachi, and
      Kunihiro Wasa.
    \newblock Reconfiguring directed trees in a digraph.
    \newblock In {\em Proceedings of the 27th International Computing and
      Combinatorics Conference (COCOON 2021)}, volume 13025 of {\em LNCS}, pages
      343--354, 2021.
    
    \bibitem{IKKKO19}
    Takehiro Ito, Naonori Kakimura, Naoyuki Kamiyama, Yusuke Kobayashi, and Yoshio
      Okamoto.
    \newblock Shortest reconfiguration of perfect matchings via alternating cycles.
    \newblock In {\em Proceedings of the 27th Annual European Symposium on
      Algorithms (ESA 2019)}, volume 144 of {\em LIPIcs}, pages 61:1--61:15, 2019.
    
    \bibitem{IKOSUY14}
    Takehiro Ito, Marcin Kaminski, Hirotaka Ono, Akira Suzuki, Ryuhei Uehara, and
      Katsuhisa Yamanaka.
    \newblock On the parameterized complexity for token jumping on graphs.
    \newblock In {\em Proceedings of the 11th Annual Conference on Theory and
      Applications of Models of Computation (TAMC 2014)}, volume 8402 of {\em
      LNCS}, pages 341--351, 2014.
    
    \bibitem{IKO14}
    Takehiro Ito, Marcin~Jakub Kaminski, and Hirotaka Ono.
    \newblock Fixed-parameter tractability of token jumping on planar graphs.
    \newblock In {\em Proceedings of the 25th International Symposium on Algorithms
      and Computation (ISAAC 2014)}, volume 8889 of {\em LNCS}, pages 208--219,
      2014.
    
    \bibitem{IKOSUY20}
    Takehiro Ito, Marcin~Jakub Kaminski, Hirotaka Ono, Akira Suzuki, Ryuhei Uehara,
      and Katsuhisa Yamanaka.
    \newblock Parameterized complexity of independent set reconfiguration problems.
    \newblock {\em Discrete Applied Mathematics}, 283:336--345, 2020.
    
    \bibitem{IOO15}
    Takehiro Ito, Hirotaka Ono, and Yota Otachi.
    \newblock Reconfiguration of cliques in a graph.
    \newblock In {\em Proceedings of the 12th Annual Conference on Theory and
      Applications of Models of Computation (TAMC 2015)}, volume 9076 of {\em
      LNCS}, pages 212--223, 2015.
    
    \bibitem{KMM12}
    Marcin Kaminski, Paul Medvedev, and Martin Milanic.
    \newblock Complexity of independent set reconfigurability problems.
    \newblock {\em Theoretical Computer Science}, 439:9--15, 2012.
    
    \bibitem{LM19}
    Daniel Lokshtanov and Amer~E. Mouawad.
    \newblock The complexity of independent set reconfiguration on bipartite
      graphs.
    \newblock {\em ACM Transactions on Algorithms}, 15(1):7:1--7:19, 2019.
    
    \bibitem{LMPRS18}
    Daniel Lokshtanov, Amer~E. Mouawad, Fahad Panolan, M.~S. Ramanujan, and Saket
      Saurabh.
    \newblock Reconfiguration on sparse graphs.
    \newblock {\em Journal of Computer and System Sciences}, 95:122--131, 2018.
    
    \bibitem{Nis18}
    Naomi Nishimura.
    \newblock Introduction to reconfiguration.
    \newblock {\em Algorithms}, 11(4):52, 2018.
    
    \bibitem{OSIZ18}
    Hiroki Osawa, Akira Suzuki, Takehiro Ito, and Xiao Zhou.
    \newblock Algorithms for coloring reconfiguration under recolorability
      constraints.
    \newblock In {\em Proceedings of the 29th International Symposium on Algorithms
      and Computation (ISAAC 2018)}, volume 123 of {\em LIPIcs}, pages 37:1--37:13,
      2018.
    
    \bibitem{Sav70}
    Walter~J. Savitch.
    \newblock Relationships between nondeterministic and deterministic tape
      complexities.
    \newblock {\em Journal of Computer and System Sciences}, 4(2):177--192, 1970.
    
    \bibitem{Sug18}
    Ken Sugimori.
    \newblock A polynomial-time algorithm for shortest reconfiguration of sliding
      tokens on a tree.
    \newblock Master's thesis, The University of Tokyo, 2019.
    \newblock (In Japanese).
    
    \bibitem{SMN16}
    Akira Suzuki, Amer~E. Mouawad, and Naomi Nishimura.
    \newblock Reconfiguration of dominating sets.
    \newblock {\em Journal of Combinatorial Optimization}, 32(4):1182--1195, 2016.
    
    \bibitem{Heu13}
    Jan van~den Heuvel.
    \newblock The complexity of change.
    \newblock In {\em Surveys in Combinatorics 2013}, volume 409 of {\em London
      Mathematical Society Lecture Note Series}, pages 127--160. Cambridge
      University Press, 2013.
    
    \end{thebibliography}
\end{document}